\def\ps@headings{%
\def\@oddhead{\mbox{}\scriptsize\rightmark \hfil \thepage}%
\def\@evenhead{\scriptsize\thepage \hfil \leftmark\mbox{}}%
\def\@oddfoot{}%
\def\@evenfoot{}}
\begin{document}
\title{Energy-Efficient Strategies for Cooperative Multi-Channel MAC Protocols}
\author{Tie Luo,~\IEEEmembership{Member,~IEEE,}
        Mehul Motani,~\IEEEmembership{Member,~IEEE,}
        and~Vikram Srinivasan,~\IEEEmembership{Member,~IEEE}
}

\maketitle

\fancypagestyle{firststyle}
{
   \fancyhf{}
   \fancyhead[L]{IEEE TRANSACTIONS ON MOBILE COMPUTING, vol. 11, no. 4, pp. 553-566, April 2012}
   \fancyfoot[l]{A preliminary version of this work was presented at \cite{tie07mobicom}.}
}

\fancypagestyle{appendixpage}
{
   \rhead{\footnotesize\thepage}
   \fancyfoot[C]{Appendix is available at IEEE Xplore and \url{https://sites.google.com/site/luotie}.}
}

\renewcommand{\headrulewidth}{0pt}
\thispagestyle{firststyle}

\newcommand{\fref}[1]{Fig.~\ref{#1}}
\newcommand{\sref}[1]{Section~\ref{#1}}
\newcommand{\aref}[1]{appendix~\ref{#1}}
\newcommand{\tref}[1]{Table~\ref{#1}}
\newcommand{\pref}[1]{Prop.~\ref{#1}}
\newcommand{\mref}[1]{Theorem~\ref{#1}}
\newcommand{\dref}[1]{Def.~\ref{#1}}
\newtheorem{thm}{Theorem}
\newtheorem{defn}{Definition}
\newtheorem{prop}{Proposition}

\begin{abstract}
Distributed Information SHaring (DISH) is a new cooperative approach to designing multi-channel MAC protocols. It aids nodes in their decision making processes by compensating for their missing information via information sharing through other neighboring nodes. This approach was recently shown to significantly boost the throughput of multi-channel MAC protocols. However, a critical issue for ad hoc communication devices, i.e., energy efficiency, has yet to be addressed. In this paper, we address this issue by developing simple solutions which (1) reduce the energy consumption (2) without compromising the throughput performance, and meanwhile (3) maximize cost efficiency. We propose two energy-efficient strategies: {\em in-situ energy conscious DISH} which uses existing nodes only, and {\em altruistic DISH} which needs additional nodes called altruists. We compare five protocols with respect to the strategies and identify altruistic DISH to be the right choice in general: it (1) conserves 40-80\% of energy, (2) maintains the throughput advantage gained from the DISH approach, and (3) more than doubles the cost efficiency compared to protocols without applying the strategy. On the other hand, our study shows that in-situ energy conscious DISH is suitable only in certain limited scenarios.
\end{abstract}

\begin{IEEEkeywords}
Control-plane cooperation, altruistic DISH, in-situ energy conscious DISH, wireless networks.
\end{IEEEkeywords}

\section{Introduction}\label{sec:intro-energy}
Using multiple channels in communication is key to improving the quality of service for wireless networks, and as a result, multi-channel MAC protocol design has attracted substantial attention from the research community. Tremendous effort has been made and various design approaches have been proposed, most of which require either multiple radios or time synchronization. Recently, \cite{tie09tmc} proposed a distinctive approach called DISH (Distributed Information SHaring), which uses a single radio but operates asynchronously. The authors designed a DISH-based protocol called CAM-MAC, in which neighboring nodes share control information with senders and receivers to compensate for their missed information in order to choose collision free channels or avoid busy receivers. Essentially, DISH can be viewed as a form of node cooperation, but there is a key difference: In traditional cooperation, intermediate nodes help relay data for source and destination nodes, which can be referred to as a {\em data-plane cooperation}. On the other hand, DISH requires nodes to send/receive control information only and thus can be referred to as a {\em control-plane cooperation}.

This approach has been extensively evaluated in \cite{tie09tmc} via the CAM-MAC protocol. The results demonstrate significant throughput improvement compared to protocols not using DISH, including existing representative multi-channel MAC protocols.

However, as DISH will be mainly used by ad hoc communication devices due to its distributed nature, energy efficiency becomes a crucial issue since those devices are mostly battery powered. The prior work \cite{tie09tmc} focused on throughput without considering energy consumption. In this paper, firstly to understand this issue particularly from a quantitative perspective, we carry out simulation to compare CAM-MAC with two protocols, Non-DISH and Non-DISH-psm where:
\begin{itemize}
\item Non-DISH is CAM-MAC with the element of DISH removed, i.e., neighbors do not share information with senders and receivers who will hence make decisions on their own. Basically, this is a (traditional) non-cooperative protocol.
\item Non-DISH-psm is Non-DISH with an ideal power saving mode (psm), where each node only turns on its radio when sending/receiving packets addressed from/to itself.
\end{itemize}
More protocol details will be described in \sref{sec:proto-energy}. The simulation results show that, although the throughput of CAM-MAC is 2.65 times Non-DISH and even more than Non-DISH-psm, its energy consumption is 2.94 times Non-DISH-psm and comparable to Non-DISH (detailed results will be given in \sref{sec:perf}). This conveys that there is potentially large space for improvement in energy efficiency for DISH.

To address this issue, we propose two energy-efficient strategies, {\em in-situ energy conscious DISH} and {\em altruistic DISH}, in this \ifdefined\thesis study. \else paper. \fi In the in-situ strategy, existing nodes rotate the responsibility of information sharing such that nodes without this responsibility can sleep when idle in order to save power. In the altruistic strategy, additional nodes called {\em altruists} are deployed to {\em take over} the responsibility of information sharing so that all the existing nodes can sleep when idle.

We conduct both qualitative and quantitative work to investigate the strategies with the following objectives: (1) reduce the energy consumption, (2) maintain or not compromise the high throughput achieved via DISH, and (3) maximize cost efficiency. Yet, the solution must be kept as simple as possible. By comparing five protocols with respect to the strategies, our study recommends altruistic DISH in general and in-situ energy conscious DISH only in certain limited scenarios.

We have also built a hardware test-bed and conducted experiments where the results have further confirmed our findings. Moreover, neither of the two strategies that we propose requires multiple radios nor time synchronization, which basically translates to lower cost, smaller hardware size and/or low complexity.

The rest of the paper is organized as follows. \sref{sec:preliminary-alt} explains DISH in more detail, and \sref{sec:qualitative} elaborates and gives a qualitative analysis of the strategies, where three important issues are identified to be addressed. These issues, optimal node deployment, cost efficiency, and throughput-energy tradeoff, are subsequently investigated in \sref{sec:deploy}, \sref{sec:costeff}, and \sref{sec:perf}, respectively. Then we discuss relevant issues in \sref{sec:discuss} and review related work in \sref{sec:relwork}. Finally, \sref{sec:conclud} concludes the paper.

\section{Understanding DISH}\label{sec:preliminary-alt}

Control information is crucial to communications but can be missing due to various reasons such as shadowing and noise. The dominant reason, however, in a multi-channel environment, is that nodes fail to tune radios to  certain channels in time, or that a radio can only listen to one channel at a time. This causes the multi-channel coordination (MCC) problem which has two variants: (1) channel conflict problem, created when a node selects a busy channel (being used by other nodes), and (2) deaf terminal problem, created when a sender attempts to communicate with a receiver that is on a different channel.

One category of solutions are to dedicate an extra radio to each channel or a common control channel in order not to miss information, as proposed by \cite{dca00,nas99,nas00,jain01,mup04,mah06}. However, such solutions will inevitably increase hardware cost and size (and energy consumption as well). Another category of solutions do not require multiple radios but require communication to be set up in specified time slots~\cite{chen03,mmac04,tmmac07} or require periodic channel switching according to certain sequences~\cite{chma00,chat00,ssch04}. Thus they rely on time synchronization which adds considerable complexity~\cite{so06sync} and degrades scalability~\cite{scale02mobihoc}, especially for multi-hop networks.

The basic idea of DISH is to compensate for nodes' missing information via cooperation. It exploits neighboring nodes as a resource to ``retrieve'' missing information from, like from a distributed database, when needed. The need for multiple radios or time synchronization, naturally becomes not necessary.

\subsection{DISH-p: A DISH-based Protocol}\label{sec:dish-detail}

For a more tangible understanding, we describe a DISH-based protocol called DISH-p (which was CAM-MAC described in \cite{tie09tmc}). In DISH-p, a sender and a receiver set up communication using PRA/PRB packets and then confirm using CFA/CFB packets. A neighbor will send INV packet if it identifies an MCC problem via the information conveyed by PRA/PRB.

There is one control channel and multiple data channels. On the control channel, a sender and a receiver exchange PRA/PRB (see \fref{fig:hsk-ctrl}) to select a data channel, and then exchange CFA/CFB to confirm the channel selection. The frame format is shown in \fref{fig:format}. If a neighbor identifies an MCC problem (via PRA or PRB), it will prepare to send an INV packet, during a cooperation collision avoidance period (CCAP), to alarm the sender or the receiver to back off. If there is no MCC problem identified by any neighbor (no INV will be sent), the sender and the receiver will switch to their chosen data channel and start DATA/ACK exchange. During DIFS and CCAP, carrier sensing is turned on to mitigate collisions via CSMA.

\begin{figure}[t]
\centering
\subfloat[Control channel handshake.]
{\includegraphics[width=.95\linewidth]{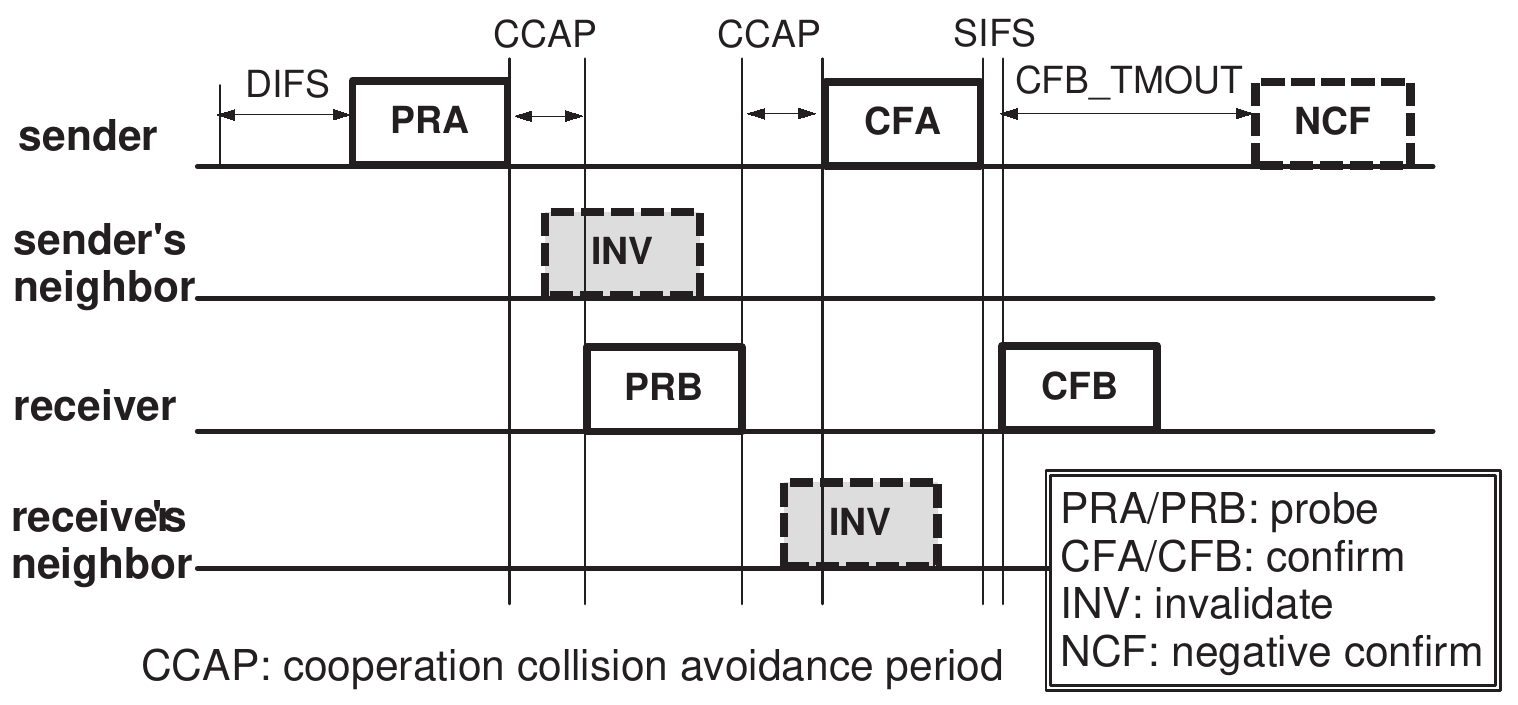}\label{fig:hsk-ctrl}}\vfil
\subfloat[Frame format. INV carries the channel usage information of an established and ongoing data exchange on a data channel (which engages the ``deaf'' receiver in the case of deaf terminal problem).]
{\makebox[\linewidth]{\includegraphics[width=.62\linewidth]{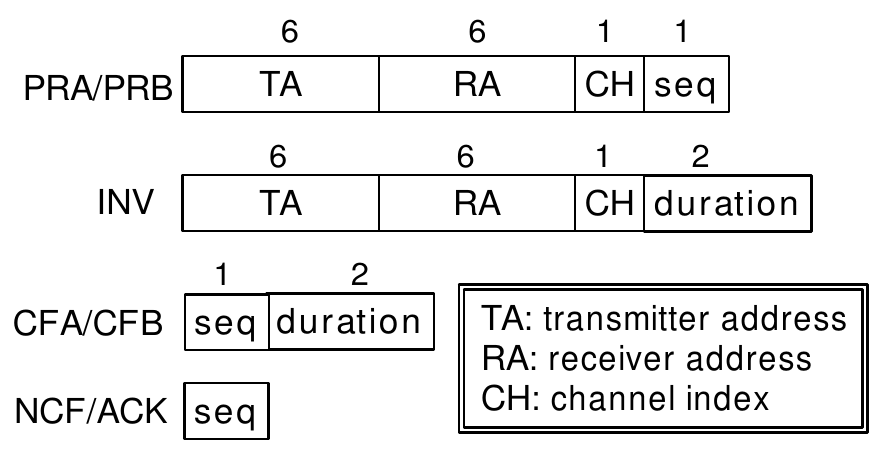}\label{fig:format}}}\vfil
\subfloat[Channel usage table. Each node maintains one to cache its overheard control information.]
{\makebox[\linewidth]{\includegraphics[width=.42\linewidth]{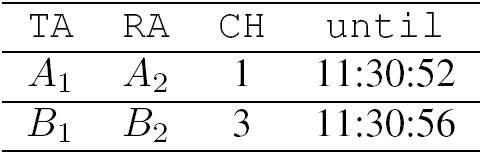}\label{fig:chtab}}}
\caption{Elements of the DISH-p protocol.}\label{fig:dishp}
\end{figure}

CCAP is introduced to mitigate the collision of multiple simultaneously sent INVs. A neighbor who identifies an MCC problem wil send INV only if it senses the control channel to be free for a period of \textsl{Uniform}[0, CCAP]. Hence a neighbor who sends INV will suppress its neighbors via CSMA.\footnote{CSMA does not avoid all collisions because not all the neighbors may hear each other. However, a collision of such still conveys an alarm to the sender/receiver because INV represents a {\em negative} message, and hence the sender/receiver will still back off. What is only compromised is that the sender/receiver will not know precisely how long at least it should back off and hence will have to estimate a backoff period, which is not a serious problem.} NCF is sent when the sender waits for CFB until timeout (due to the receiver receiving INV), in order to inform the sender's neighbors to disregard CFA.

The applicable scenarios of the protocol are mesh networks and ad hoc networks, not sensor networks. In sensor networks, data packets are usually small and the overhead of the control channel handshake will be significant. Even using a packet train would not suit because sensing traffic is usually periodic and not bursty.

\section{Energy-Efficient Strategies}\label{sec:qualitative}

The main challenge to achieving energy efficiency for DISH is that a prerequisite of information sharing is {\em information gathering}, a process that requires nodes to stay awake for overhearing, which presents a challenge for nodes to switch off radio when idle. The strategies we elaborate below meet this challenge and we also provide a qualitative analysis below.

\subsection{In-Situ Energy Conscious DISH}

In this strategy, all the existing nodes rotate the responsibility of information sharing (i.e., cooperation) such that nodes without the responsibility can sleep when idle.\footnote{We say that a node is idle if it is not engaged in sending/receiving its {\em own} packets. For example, overhearing (other packets) and waiting for free data channels (though with data packets in queue) are both idle.} There are two methods to implement this strategy:
\begin{itemize}
\item Probabilistic method: Each node decides whether to cooperate or not according to a (static or dynamic) probability. This is similar to probabilistic flooding \cite{prob_flood99mobicom,prob_flood05IWWAN,dyn05jpdc} and probabilistic routing \cite{wsna03prob_rt,tanya05pgr} in ad hoc networks,
and cluster-head rotating algorithms (e.g., LEACH~\cite{leach00} and HEED~\cite{heed04}) in sensor networks.
\item Voting method: nodes periodically vote or elect a subset of nodes to cooperate. This is similar to GAF\cite{gaf01mobicom}, Span\cite{span01mobicom}, PANEL\cite{panel07mass} and VCA\cite{vca07}.
\end{itemize}

An apparent advantage of the in-situ strategy is that it does not require additional nodes. On the other hand, a runtime probabilistic or voting mechanism must be introduced and must be (1) distributed, (2) fair (in terms of energy consumption), and (3) adaptive (to network dynamics such as traffic and energy drainage). These would introduce considerable complexity and overhead. In addition, it has to consider other factors as listed below.

First, the mechanism would rely on message broadcast as also used by \cite{prob_flood99mobicom,prob_flood05IWWAN,dyn05jpdc,gaf01mobicom,span01mobicom,vca07,tanya05pgr}.
However, broadcasting in a multi-channel environment is shown by \cite{so06sync} to be very unreliable and difficult because each broadcast can reach only a subset of neighboring nodes. Alternatively, broadcasts might be reduced or avoided by determining cooperative nodes based on geographic information, like in \cite{wsna03prob_rt,gaf01mobicom,panel07mass}.
However, this requires expensive GPS support or a distributed localization algorithm (e.g., \cite{bruck05mobihoc,caruso05infocom}) which introduces additional overhead and complexity to those incurred by rotation itself.

Second, rotating the responsibility of cooperation also involves other resource-consuming factors including two-hop neighbor discovery (shown in \cite{tie06cam,tie09tmc}) and the assessment of dynamic information (such as energy and traffic, like in \cite{leach00,span01mobicom,tanya05pgr}).

Third, how to integrate a probabilistic or voting mechanism into a legacy DISH protocol is a non-trivial problem and a viable solution is yet to be found.

In summary, the complexity, overhead, and unreliability of in-situ energy conscious DISH would consume considerable resource and eventually negate its possible performance gain. Nonetheless, for a quantitative understanding, we still evaluate this strategy using a {\sl Genie In-Situ} protocol (detailed in \sref{sec:proto-energy}) which establishes an upper bound for all such in-situ protocols.

\subsection{Altruistic DISH}

In this strategy, additional nodes called {\em altruists} are deployed to take over the responsibility of information sharing (i.e., cooperation) from the existing nodes, which we call {\em peers} to distinguish from altruists, so that peers can sleep when idle. Altruists are the same as peers in terms of hardware, but are different in terms of software: they solely cooperate (do not carry data traffic) and always stay awake.

An apparent drawback of this strategy is that it requires additional nodes. However, this is offset by substantive advantages. First, it is very simple to implement the strategy: one only needs to introduce a boolean flag to disable data related functions on altruists and cooperation related functions on peers. We have done this in both our simulation code and hardware implementation code. Equally importantly, there is no additional runtime mechanism and hence runtime overhead.

Second, unlike the in-situ strategy, this strategy does not have the multi-channel broadcasting problem. Altruists always stay on the same channel (control channel) and send/receive packets only on the control channel.

Third, this strategy is robust to network dynamics (such as traffic and residual energy). Every altruist is cooperative and will react to every MCC problem that it identifies; they do not need to adjust any parameter on the fly. In fact, even the deployment of altruists, which is an offline process, can be done with a constant number for any given peer density, as will be shown in \sref{sec:deploy}.

Fourth, since peers only carry data traffic and need not to cooperate, they are like nodes in traditional (non-DISH) networks and thus can adopt a legacy sleep-wake scheduling algorithm, where a lot of choices are available and will be provided in \sref{sec:relwork}.

Finally, unlike the in-situ strategy and the original DISH where cooperation is provided in an {\em opportunistic} manner---meaning that cooperative nodes are not always available, altruistic DISH provides cooperation in a {\em guaranteed} manner.

\subsection{Protocols to Investigate}\label{sec:proto-energy}

In the sequel, we investigate  {\sl Genie In-Situ} and {\sl Altruistic}, which are two protocols made by applying the above two strategies to DISH-p (original DISH protocol) respectively. For the purpose of comparison, we also need to introduce two non-DISH protocols, one with and the other without power saving, viz., {\sl Non-DISH} and {\sl Non-DISH-psm}. The following describes all the five protocols.

\begin{enumerate}
\item {\sl DISH-p}: the protocol described in \sref{sec:dish-detail}.
\item {\sl Non-DISH}: a (traditional) non-cooperative protocol, derived from DISH-p by by removing the cooperative element, i.e., neighbors do not  share control information with senders/receivers.
\item {\sl Non-DISH-psm}: Non-DISH with a power saving mode (PSM), where each node only turns on its radio when sending/receiving packets addressed from/to itself (i.e., they do not overhear). This is an ideal mode because it assumes a receiver can automatically wake up upon a communication request from a sender. We use this rather than adopt an existing sleep-wake scheduling algorithm (which will be reviewed in \sref{sec:relwork}) in order to avoid coupling performance to a specific algorithm. Besides, this still keeps our comparison fair because the same PSM will be used by all the other power-saving protocols ({\sl Genie In-Situ} and {\sl Altruistic}).
\item {\sl Genie In-Situ}: this protocol is DISH-p with the in-situ strategy applied. It uses a genie-aided (optimal) rotating mechanism in order to establish upper bound performance for the in-situ strategy. In this protocol, upon each occurrence of an MCC problem, the best neighbor will be chosen (by the genie) to cooperate\footnote{The best neighbor is a neighbor with the most helpful information when an MCC problem occurs. E.g. in a channel conflict problem where a node $u$ chooses a busy data channel which is used by multiple sender-receiver pairs (consider a multi-hop environment), the best neighbor is the one who knows which pair has the {\em longest residual time} in using that channel---this neighbor can inform node $u$ of the minimum duration to back off for.}  and all the other neighbors are treated as virtually sleeping (not consuming energy though having gathered information via overhearing) as per the ideal PSM.
\item {\sl Altruistic}: this protocol is DISH-p with the altruistic strategy applied. Altruists stay awake to gather information and, upon identifying an MCC problem, share information (cooperate). All existing nodes do not cooperate and they adopt the ideal PSM to sleep when idle.
\end{enumerate}

\subsection{Issues to Investigate}

There are three relevant issues that need to be addressed:
\begin{enumerate}
\item {\em Node deployment} (addressed in \sref{sec:deploy}): How to deploy altruists for Altruistic DISH.
\item {\em Cost efficiency} (addressed in \sref{sec:costeff}): We propose a metric called bit-meter-price (BMP) ratio which takes into account various factors to measure the overall performance of a protocol.
\item {\em Throughput-energy trade-off} (addressed in \sref{sec:perf}): Zooms in to specifically inspect the throughput and energy performance.
\end{enumerate}

In the rest of the study we assume an ad hoc network with static topology. Each node has a single half-duplex radio that can dynamically switch among all available channels but can only use one at a time. One channel is designated as a control channel and the others as data channels. Data channel selection is random, meaning that a sender/receiver randomly selects one from a list of data channels that it deems free based on its knowledge which it dynamically updates (e.g., channel usage table as in \fref{fig:chtab}).\footnote{Another channel selection method is first to try the previously used channel and, if not available, then random selection. Both methods were studied in \cite{tie09tmc} and shown, in most cases, to result in only marginal difference in the context of DISH.} Finally, we assume all links are bidirectional, i.e., if node $u$ can hear node $v$ then $v$ can hear $u$ as well.

\section{Optimal Node Deployment}\label{sec:deploy}

As a prerequisite, we need to develop a concept called {\em cooperation coverage}.

\subsection{Cooperation Coverage}\label{sec:cov}

\begin{defn}[UP and CUP]\label{def:up}
An unsafe pair (UP) is a pair of peers that can create MCC problems to each other. A covered unsafe pair (CUP) is an UP that both peers are within the transmission range of at least one common altruist.
\end{defn}

\begin{figure}[ht]
\centering
\includegraphics[width=.7\linewidth]{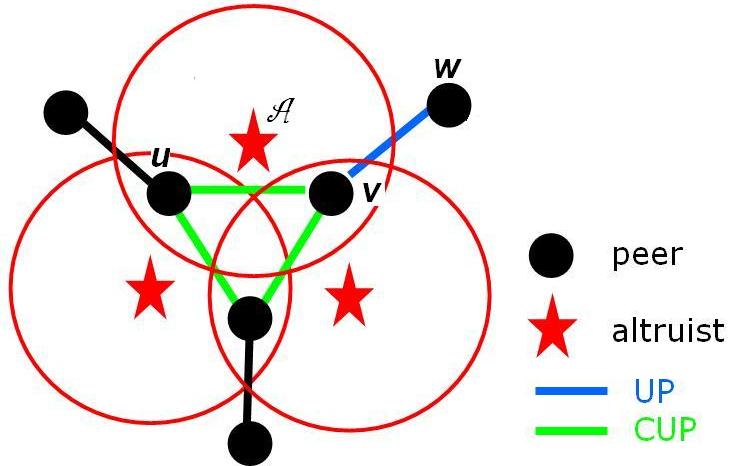}
\caption{Illustration of UP and CUP. Node pair $(u,v)$ is a CUP (covered by altruist $\mathcal{A}$) while $(v,w)$ is an UP. Each circle denotes the transmission range of an altruist.}\label{fig:up-cup}
\end{figure}

An illustration of UP and CUP is given in \fref{fig:up-cup}, and the necessary and sufficient condition for creating MCC problems (i.e., forming an UP) is given in \pref{prop:form_up}. Briefly speaking, two adjacent peers can create MCC problems if each of them has other communicable neighbor(s), because one peer may switch to a data channel and miss information of the other peer.

\begin{prop}\label{prop:form_up}
In an undirected graph where each vertex represents a peer and each edge represents the relationship between two neighboring peers, denote by $d_i$ the degree of an arbitrary vertex $i$. If PSM is not used, two adjacent vertices $i$ and $j$ form an UP if and only if:
\begin{enumerate}
\renewcommand{\labelenumi}{(\alph{enumi})}
\item $d_i\ge 2$, $d_j\ge 2$, and $d_i=d_j=2$ does not hold, or
\item $d_i=d_j=2$, and $i$ and $j$ are not on the same three-cycle (i.e., triangle).
\end{enumerate}
If PSM is used (peers sleep when idle), the above condition remains unchanged for the channel conflict problem, but changes to the following for the deaf terminal problem:

$d_i\ge 1$, $d_j\ge 1$, and $d_i=d_j=1$ does not hold.
\end{prop}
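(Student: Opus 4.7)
My plan is to prove each ``iff'' by case analysis over the two kinds of MCC problems---channel conflict and deaf terminal---and to handle sufficiency and necessity separately within each case. The governing intuition is that an MCC problem between $i$ and $j$ is realizable exactly when $j$ can be engaged on a data channel with a partner $y \neq i$ at a moment when $i$ either picks a conflicting data channel for its own exchange (channel conflict) or attempts to contact $j$ without knowing $j$ is busy (deaf terminal). The degree and triangle conditions in the statement are then precisely the combinatorial requirements for such a configuration to be realizable.

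For the non-PSM case, I would first argue necessity for channel conflict: if $i$ selects a data channel in use by $j$, then $j$ must have a partner $y \neq i$ (forcing $d_j \ge 2$) and $i$ must itself be selecting a channel for some destination $m \neq j$ (else the failure is a deaf-terminal event, not a channel-conflict event), which forces $d_i \ge 2$. In the boundary case $d_i = d_j = 2$ with a triangle on $\{i,j,w\}$, both $y$ and $m$ would have to equal $w$, an impossibility since $w$ cannot simultaneously partner $j$ and be free to partner $i$. Sufficiency would be by an explicit timeline: $i$ and its non-$j$ neighbor $x$ open a data exchange on $c_1$; during that window $j$ handshakes with a neighbor $y \notin \{i,x\}$ (which exists under (a) or (b)) and agrees on $c_2$; $i$ returns to the control channel oblivious to $c_2$ and, by the random channel-selection policy of \sref{sec:proto-energy}, eventually picks $c_2$. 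The deaf-terminal variant is handled by the same skeleton: necessity again needs $d_i \ge 2$ because in non-PSM mode $i$ can miss $j$'s PRA only while $i$ is off the control channel (i.e., in its own data exchange), and sufficiency follows from the same construction with $i$ attempting to contact $j$ at the end.

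For the PSM case, the channel-conflict mechanism is unchanged---$j$'s busy state and $i$'s fresh random selection still demand $d_j \ge 2$ and $d_i \ge 2$, and the triangle exclusion still goes through---so the condition is inherited verbatim. The deaf-terminal condition, however, weakens because a peer has no cached control-channel history when it wakes under PSM, so the ``$d_i \ge 2$ to be off the control channel'' requirement disappears: a single neighbor of $j$ suffices to make $j$ deaf, and symmetry then yields $d_i, d_j \ge 1$ with not both equal to $1$. The triangle exclusion also vanishes here, since $i$ can now wake and attempt $j$ while $j$ is busy with the common neighbor $w$, a scenario that non-PSM vigilance would have ruled out.

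The hardest part will be the necessity argument in the triangle case for non-PSM, where I have to rule out every possible scheduling of events that could leave $i$ ignorant of $j$'s activity. Pinning this down requires careful bookkeeping over the DIFS, CCAP, PRA/PRB, CFA/CFB/NCF, and data phases of \sref{sec:dish-detail}, showing that $i$'s only off-control-channel window (a data exchange with its unique non-$j$ neighbor, forced to be $w$) exactly coincides with $j$ being unable to open a fresh handshake with $w$. Once this ``locality'' observation is established, the rest of the proof reduces to routine degree counting.
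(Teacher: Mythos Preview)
The paper relegates the proof of \pref{prop:form_up} to an external appendix that is not reproduced in the source you were given, so a line-by-line comparison is not possible. That said, your plan is the natural one and is almost certainly what the appendix contains: split over the two MCC variants, argue sufficiency by an explicit scheduling construction, and argue necessity by degree counting together with the observation that in a triangle $\{i,j,w\}$ without PSM, $i$ can leave the control channel only by exchanging data with $w$, which simultaneously prevents $j$ from completing a handshake with its sole non-$i$ neighbor (also $w$). Your treatment of the PSM case---channel conflict unchanged, deaf terminal relaxed because a freshly woken $i$ carries no control-channel history---is also correct.

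One refinement worth making explicit in the write-up: in your channel-conflict necessity sketch you use $i$'s \emph{current} handshake partner $m$ as the witness to $d_i \ge 2$, but the real witness is the partner $p$ with whom $i$ was on a data channel \emph{earlier}, at the moment it missed $j$'s handshake. In general $m$ and $p$ need not coincide; in the triangle case both are forced to be $w$, and it is this double coincidence that yields the contradiction. Keeping $m$ and $p$ separate will also make the ``hardest part'' you flag much lighter than you anticipate: once you observe that any off-control-channel window of $i$ must engage $w$, and any handshake of $j$ must also engage $w$, the impossibility follows immediately from $w$ having a single radio, without needing to trace the DIFS/CCAP/CFA/CFB/NCF timings at all.
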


\begin{IEEEproof}See Appendix.\end{IEEEproof}
\thispagestyle{appendixpage}

\begin{defn}[Cooperation Coverage --- $p_{cov}$]\label{def:cov}
\[ p_{cov} \triangleq \frac{N_{cup}}{N_{up}}, \]
where $N_{cup}$ is the number of CUPs and $N_{up}$ is the number of UPs in a network.
\end{defn}
We say that a network achieves {\em full cooperation coverage} if $p_{cov} = 100\%$.

\begin{prop}\label{prop:coll_free}
Consider a network using altruistic DISH. In order to achieve free of MCC problems, full cooperation coverage is
\begin{enumerate}
\item necessary for a multi-hop network, and
\item necessary and sufficient for a single-hop network.
\end{enumerate}
\end{prop}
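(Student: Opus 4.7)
The plan is to prove necessity once (it applies to both the multi-hop and single-hop cases) and then handle sufficiency separately in the single-hop setting. For necessity I would argue by contrapositive: if $p_{cov}<1$ then an MCC problem must be reachable. For sufficiency, I would show that in a single-hop topology the common altruist of any UP can always intervene in time to preempt the problem.

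For the necessity direction, suppose $p_{cov}<1$, so there is at least one UP $(i,j)$ that is not a CUP. By \pref{prop:form_up}, the pair $(i,j)$ admits some execution in which the two peers create an MCC problem: for instance, $i$ completes its PRA/PRB/CFA/CFB handshake with another of its neighbors, moves to a data channel $c$, and then $j$ initiates its own handshake with either $i$ itself (deaf terminal) or with yet another neighbor on $c$ (channel conflict). Because $(i,j)$ is not a CUP, no single altruist overhears both $i$'s and $j$'s control traffic; any altruist that learned ``$i$ is on $c$'' from the first handshake lies outside $j$'s transmission range, and any altruist that hears $j$'s PRA has no record of $i$'s activity. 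Hence no altruist satisfies the INV-trigger condition, no INV is transmitted during CCAP, and the MCC problem materializes. This establishes necessity in both the multi-hop and single-hop settings.

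For sufficiency in the single-hop case, assume $p_{cov}=1$ and that all peers and altruists share a single broadcast domain. Consider an arbitrary pair of peers $(i,j)$ that could give rise to an MCC problem. By \pref{prop:form_up}, $(i,j)$ is a UP, and by the full-coverage hypothesis it is a CUP, so there is some altruist $\mathcal{A}$ with both $i$ and $j$ in its transmission range. Under the single-hop assumption $\mathcal{A}$ overhears the entire PRA/PRB exchange initiated from $i$'s side, caches the corresponding channel-usage entry (cf.\ \fref{fig:chtab}), and subsequently overhears $j$'s PRA/PRB; at that instant $\mathcal{A}$ identifies the MCC condition and prepares an INV. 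Because every altruist also hears every other altruist in a single-hop network, the random CCAP backoff combined with CSMA ensures that the first altruist to fire transmits its INV without collision, and that INV is received by both $i$ and $j$ before they retune to the data channel. The sender or receiver then aborts, precluding the MCC problem.

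The main obstacle is the sufficiency step: I have to align the argument carefully with the exact timeline of the DISH-p handshake so that $\mathcal{A}$ has already cached the information it needs before $j$'s PRA arrives, and so that its INV lands before the CFA/CFB phase completes and the peers retune to the data channel. Once the timing is pinned down, the remainder is a direct combination of \dref{def:up}, \dref{def:cov}, and \pref{prop:form_up}. A secondary subtlety, which explains why sufficiency is \textbf{not} claimed in the multi-hop case, is that two altruists covering different UPs need not hear each other, so despite $p_{cov}=1$ their INVs can collide at a peer and let an MCC problem slip through; this is consistent with the footnote on CSMA in \sref{sec:dish-detail}.
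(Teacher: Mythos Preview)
The paper defers the proof of \pref{prop:coll_free} to an external appendix that is not included in the provided source, so a line-by-line comparison is not possible. Your overall architecture---contrapositive for necessity, direct argument for sufficiency in the single-hop case---is the natural one and almost certainly matches the intended structure. Two points deserve tightening, however.

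First, in the necessity step you assert that ``any altruist that learned `$i$ is on $c$' from the first handshake lies outside $j$'s transmission range.'' This is not quite forced by $(i,j)$ failing to be a CUP. An altruist $\mathcal{A}$ that lies within range of $i$'s handshake partner $k$ (but not of $i$ itself) can still decode $k$'s PRB/CFB, cache the entry ``$i,k$ on channel $c$,'' and later warn $j$ if $\mathcal{A}$ also covers $j$. In other words, an altruist covering the pair $(k,j)$ can sometimes preempt an MCC problem on the uncovered pair $(i,j)$. Your constructed execution therefore does not automatically go through; you need either to argue that the roles of $i$ and $j$ can be swapped so that at least one direction is uncoverable, or to make explicit the modelling assumption (implicit in \dref{def:up}) that an altruist contributes to avoiding an MCC problem on $(i,j)$ only when it directly covers that pair. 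Without one of these, the necessity argument has a gap.

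Second, your explanation of why sufficiency fails in the multi-hop case---that INVs from mutually hidden altruists collide at a peer---is at odds with the footnote in \sref{sec:dish-detail}, which states that an INV collision still functions as an alarm and causes the peer to back off. A more defensible mechanism is that the covering altruist can \emph{miss} the first handshake altogether: a hidden-terminal transmission from some peer outside $i$'s range but inside $\mathcal{A}$'s range can collide with $i$'s PRA/CFA at $\mathcal{A}$, so $\mathcal{A}$ never caches the relevant channel-usage entry and has nothing to report when $j$'s PRA arrives. This cannot happen in a single-hop network because CSMA among mutually audible nodes prevents such overlap, which is exactly the asymmetry the proposition captures.
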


\begin{IEEEproof}See Appendix.\end{IEEEproof}

\subsection{Random Deployment}\label{sec:deploy_rand}

In random deployment, all nodes are uniformly distributed in a plane region.

\begin{thm}\label{thm:deploy_rand}
Consider an infinite network where peers and altruists are randomly distributed (as per a two-dimensional Poisson point process). If the peer density is $\rho_{peer}$, then in order to achieve a cooperation coverage of $p_{cov}$, the altruist density, $\rho_{alt}$, must satisfy
\begin{align}\label{eq:rhoalt}
    \rho_{alt} > - \frac{\ln (1-p_{cov}) }{(\frac{2\pi}{3} - \frac{\sqrt{3}}{2}) r^2}.
\end{align}
\end{thm}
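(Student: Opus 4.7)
The plan is to reduce the theorem to a circle-intersection computation plus the Poisson void probability. By \pref{prop:form_up}, any UP is a pair of peers within each other's transmission range, so their separation $d$ lies in $[0,r]$. Conditionally on the two peer positions, an altruist covers the UP iff it lies in the intersection of the two transmission disks of radius $r$ centred at the peers. The area of this lens is the standard circle--circle intersection
\[
A(d) \;=\; 2r^{2}\arccos\!\left(\tfrac{d}{2r}\right) \;-\; \tfrac{d}{2}\sqrt{4r^{2}-d^{2}}.
\]
Since the altruist PPP of intensity $\rho_{alt}$ is independent of the peer process, the number of altruists falling in this lens is Poisson with mean $\rho_{alt}A(d)$, so the conditional probability that a given UP is a CUP equals $1-e^{-\rho_{alt}A(d)}$.

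Next I would show that $A(d)$ is monotonically decreasing on $[0,r]$; the easiest route is geometric (pulling the disk centres apart can only shrink their overlap), but a direct differentiation of $A$ works just as well. The minimum over the admissible range is therefore attained at $d=r$, and direct substitution gives $A(r)=(2\pi/3-\sqrt{3}/2)r^{2}$, which is exactly the constant appearing in the theorem. This yields the uniform pointwise bound
\[
\Pr(\text{UP is a CUP}\mid d) \;\ge\; 1-\exp\!\left(-\rho_{alt}\bigl(\tfrac{2\pi}{3}-\tfrac{\sqrt{3}}{2}\bigr)r^{2}\right)
\]
for every admissible separation $d\in[0,r]$.

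Finally, I would interpret the ratio $p_{cov}=N_{cup}/N_{up}$ on the infinite network as the Palm probability that a typical UP is covered; for a PPP this identification is legitimate by Slivnyak's theorem together with spatial ergodicity. Because the pointwise bound above does not depend on $d$, it also lower-bounds $p_{cov}$; imposing $1-\exp(-\rho_{alt}(2\pi/3-\sqrt{3}/2)r^{2})\ge p_{cov}$ and solving for $\rho_{alt}$ produces precisely the inequality in the statement, with the strict ``$>$'' reflecting that the worst-case event $d=r$ has probability zero.

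I expect the sole real obstacle to be not the geometry---the lens formula, its monotonicity, and the evaluation at $d=r$ are all routine---but the justification that the worst-case, per-pair lower bound genuinely controls the aggregate ratio $N_{cup}/N_{up}$. It is exactly this Palm/ergodic step that causes the peer density $\rho_{peer}$ to drop out of the final bound, which is at first sight surprising and is the only non-mechanical content in the argument.
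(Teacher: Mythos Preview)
Your proposal is correct and follows essentially the same route as the paper: compute the lens area $A(d)$, apply the Poisson void probability $1-e^{-\rho_{alt}A(d)}$, note monotonicity and minimize at $d=r$ to obtain $(2\pi/3-\sqrt{3}/2)r^{2}$, and invert the inequality. The only difference is that you justify the passage from per-UP coverage probabilities to the aggregate $p_{cov}$ via Palm calculus and spatial ergodicity, whereas the paper simply asserts that ``achieving $p_{cov}$'' is equivalent to $\min_{(i,j)} p_{ij}^{cov}>p_{cov}$ without invoking that machinery.
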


\begin{proof}
Denote by $p_{ij}^{cov}$ the probability that an arbitrary UP $(i,j)$ is covered (i.e., is a CUP). By Definition~\ref{def:up}, $p_{ij}^{cov}$ is equivalent to the probability that at least one altruist exists in the common transmission range of $i$ and $j$, which is given by
\begin{align}\label{eq:p_pov}
    p_{ij}^{cov} = 1 - e^{-\rho_{alt} A_{ij}},
\end{align}
where $A_{ij}$ is the intersected area of $i$ and $j$'s transmission ranges, and can be proven using simple geometric techniques to be
\begin{align}\label{eq:intersect}
    A_{ij} = 2 r^2 \theta - r^2 \sin 2\theta,
\end{align}
where $\theta = \arccos \frac{d}{2r}$, $d$ is the Euclidean distance between $i$ and $j$, and $r$ is the transmission range.

To achieve $p_{cov}$ is equivalent to achieving $p_{ij}^{cov}>p_{cov}$ for all UPs $(i,j)$, meaning
\begin{align}\label{eq:ineq}
    \min_{(i,j)} p_{ij}^{cov} > p_{cov}.
\end{align}
According to \eqref{eq:p_pov}, $p_{ij}^{cov}$ is a monotonically increasing function of $A_{ij}$, and hence is minimized by minimizing $A_{ij}$. To minimize $A_{ij}$, consider the minimization domain, namely all UPs. According to \eqref{eq:intersect}, $A_{ij}$ is a monotonically decreasing function of $d$. Since $d\in[0,r]$, $A_{ij}$ is therefore minimized at $d=r$:
\begin{align*}
    \min_{(i,j)} A_{ij} = A_{ij}|_{d=r} = (\frac{2\pi}{3} - \frac{\sqrt{3}}{2}) r^2,
\end{align*}
and thus \eqref{eq:ineq} resolves to
\begin{align}
    \min_{(i,j)} p_{ij}^{cov} &= 1 - \exp(-\rho_{alt}\cdot \min_{(i,j)} A_{ij})\notag\\
        &= 1- \exp[- \rho_{alt}\cdot (\frac{2\pi}{3} - \frac{\sqrt{3}}{2}) r^2 ]\notag\\
        &> p_{cov}, \nonumber
\end{align}
which is then reduced to \eqref{eq:rhoalt}.
\end{proof}

\mref{thm:deploy_rand} gives the relationship between altruist density $\rho_{alt}$ and cooperation coverage $p_{cov}$. Note that $\rho_{peer}$ does not appear in \eqref{eq:rhoalt}. This is important because it implies that altruist deployment is independent of peer density and hence is remarkably simplified. This also makes significant practical sense because, in reality, the number of peers often varies or is uncertain.

\mref{thm:deploy_rand} also shows that $\rho_{alt}\rightarrow \infty$ if $p_{cov} = 100\%$. This tells network planers not to aim at full cooperation coverage in multi-hop networks. For single-hop networks, it is easy to see that a single altruist achieves full cooperation coverage.

\begin{figure}[ht]
\ifdefined\thesis
\centering\includegraphics[width=0.45\linewidth]{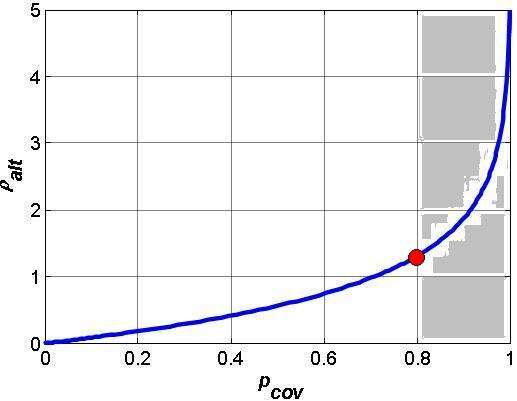}
\else
\centering\includegraphics[width=0.55\linewidth]{alt_vs_pcov.jpg}
\fi
\caption{$\rho_{alt}$ versus $p_{cov}$ (\mref{thm:deploy_rand}).}
\label{fig:alt_vs_pcov}
\end{figure}
\begin{table}[ht]
\renewcommand{\arraystretch}{1.15}
\centering
\caption{Some Discrete Values of $\rho_{alt}$ versus $p_{cov}$}
\label{tab:altruist}
\begin{tabular}{|c|c|c|c|c|c|c|c|}
    \hline
$p_{cov}$ & 50\% & 60\% & 70\% & 80\% & 90\% & 95\% & 99\% \\
    \hline
$\rho_{alt} >$ & 0.56 & 0.75 & 0.98 & 1.31 & 1.87 & 2.44 & 3.75 \\
    \hline
\end{tabular}
\end{table}
\fref{fig:alt_vs_pcov} plots the relationship between $\rho_{alt}$ and $p_{cov}$, and \tref{tab:altruist} enumerates some discrete values. We can see that beyond the point $(p_{cov} = 80\%$, $\rho_{alt}=1.31)$, $\rho_{alt}$ sharply increases, which indicates a cost spike. This motivates us to investigate the performance trend of altruistic DISH in a range including this point.

{\sl 1) Simulation Setup and Results}\label{sec:setup}

In our simulation, the metrics are aggregate end-to-end throughput and aggregate power consumption (including both peers and altruists if any). In order to compute power consumption, we conducted a survey of power consumption rates of commercial wireless cards. According to \cite{cisco-wifi}, a Cisco Aironet 350 series WiFi card consumes 2250/1350/75 mW in TX/RX/SLEEP state. According to \cite{infocom01invest} (with some simple calculation), an IEEE 802.11 WaveLAN PC card consumes 1327/967/843/66 mW in TX/RX/IDLE/SLEEP state in the 2Mbps category, and 1346/901/741/48 mW in the 11Mbps category. According to other respective sources, Intel Pro 2011, 3Com xJack, Compaq WL1000 and Siemens SS1021 all have the rates with the similar ratio as the above.\footnote{A (relative) ratio matters more than (absolute) rates as this is a comparative study which focuses on the difference between protocols.} Therefore, we use the average rates based on our survey, namely 25/18/15/1 $\times$50mW in the TX/RX/IDLE/SLEEP states respectively, to calculate the power consumption in simulation.

We set up the simulation as follows. Nodes are randomly placed in a plane area of 100m$\times 100$m for single-hop networks and 1500m$\times 1500$m for multi-hop networks. The radio transmission range is 250m and the interference range is 500m. The capture effect is enabled with a threshold of 6dB. In single-hop networks, $n$ peers randomly form $n/2$ disjoint flows. In multi-hop networks, $n$ peers randomly form $n$ non-disjoint flows (each peer is the source of one flow and also the destination of another flow). Shortest path routing is used. There are one control channel and five data channels with bandwidth 1Mbps each. Packet arrival is Poisson, and data payload is 2KB. PLCP is 15 bytes (header 6 bytes and preamble (short) 9 bytes).
SIFS is 10$\mu$s and CCAP is 35$\mu$s.\footnote{CCAP needs to be sufficient for a node to detect signal transmission, not need to receive a complete message.} Channel switching delay is ignored because it is common to all the protocols in this comparative study, and is not long (80$\mu$s according to \cite{ssch04}, equivalent to transmitting 10 bytes on an 1Mbps channel).
We use a discrete-event simulator which we developed on Fedora Core 5 with a Linux kernel of version 2.6.9. Each simulation is terminated after a total of 100,000 data packets are sent. All results are averaged over 15 randomly generated networks.

\begin{figure}[t]
\centering
\ifdefined\thesis
    \subfloat[Aggregate throughput.]{\includegraphics[trim=4mm 0 1.1cm 6mm, clip, width=0.48\linewidth]{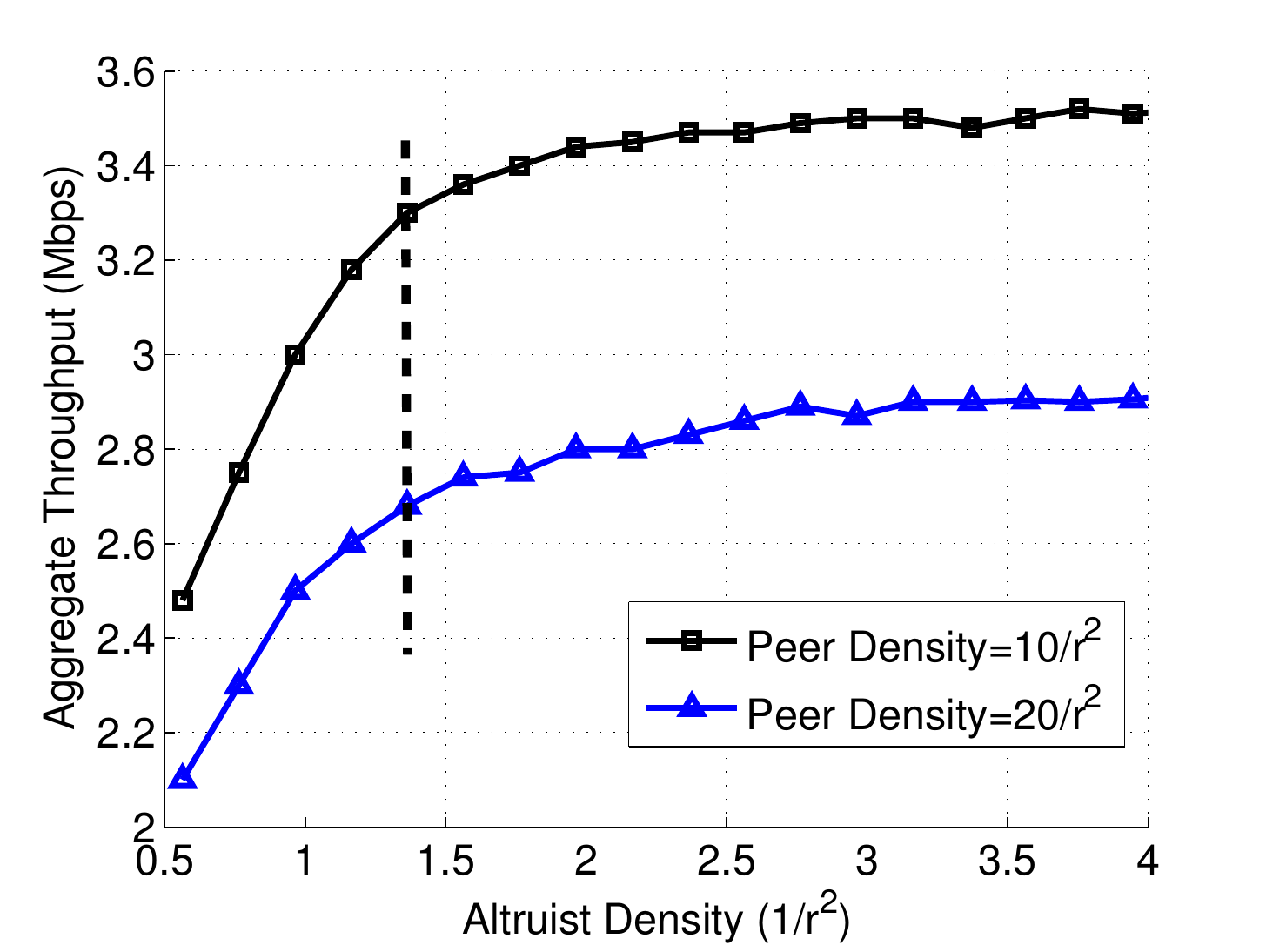}
    \label{fig:fman_thpt}}
    \subfloat[Aggregate power consumption.]{\includegraphics[trim=2mm 0 1.1cm 6mm, clip, width=0.48\linewidth]{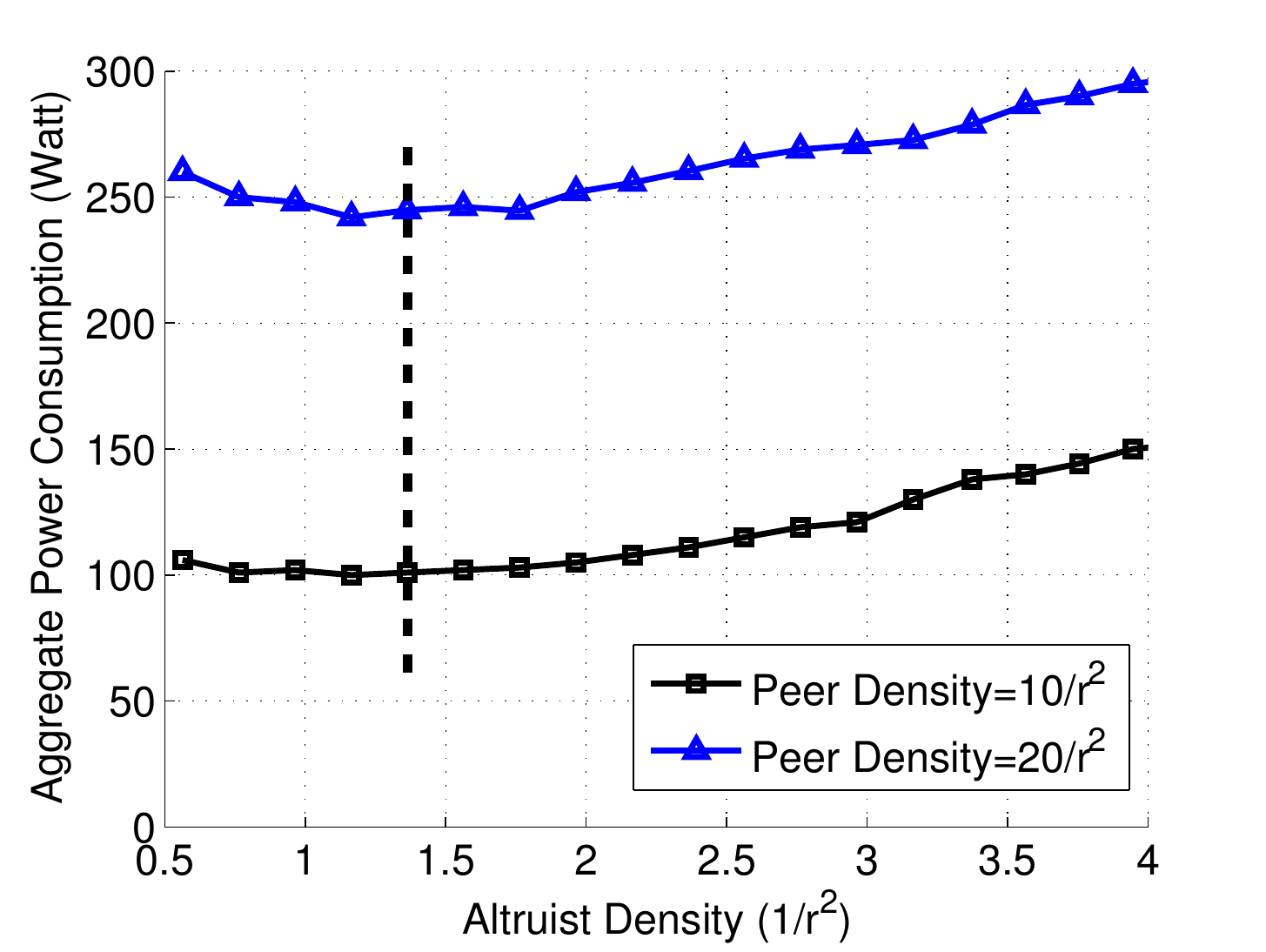}
    \label{fig:fman_ener}}
\else
    \subfloat[Aggregate throughput.]{\includegraphics[trim=4mm 0 1.1cm 6mm, clip, width=0.6\linewidth]{fman_thpt}
    \label{fig:fman_thpt}}\vfil
    \subfloat[Aggregate power consumption.]{\includegraphics[trim=2mm 0 1.1cm 6mm, clip, width=0.6\linewidth]{fman_ener}
    \label{fig:fman_ener}}
\fi
\caption{Finding the optimal altruist density for {\sl Altruistic}.}
\label{fig:fireperf}
\end{figure}

To investigate the performance of altruistic DISH around $\rho_{alt}$=1.31/$r^2$, we run {\sl Altruistic} in multi-hop networks by varying $\rho_{alt}$ from 0.56/$r^2$ to slightly more than 3.75/$r^2$, which corresponds to varying $p_{cov}$ from 50\% to more than 99\%. Data generation (at each peer) is Poisson with rate 25kbps. The results for peer density ($\rho_{peer}$) of $10/r^2$ and $\rho_{peer}=20/r^2$ are shown in \fref{fig:fireperf} (the results for $5/r^2$ and $30/r^2$ have the similar trend and are omitted). We see that, irrespective of the value of $\rho_{peer}$, the increasing throughput starts to level off at a knee point at $\rho_{alt}$ of 1.3--2$/r^2$ (\fref{fig:fman_thpt}), and the power consumption achieves the minimum also at $\rho_{alt}$ of 1.3--2$/r^2$ (\fref{fig:fman_ener}). This observation suggests a judicious choice of $\rho_{alt}$ in this range. Hence we adopt $\rho_{alt}=1.31/r^2$ as a near-optimum value which corresponds to $p_{cov}=80\%$. 

The results are explained as follows. Adding altruists converts UPs into CUPs and thereby reduces collisions and retransmissions. This helps increase throughput and save energy as well. On the other hand, as more and more altruists are added, more and more UPs become {\em redundantly} covered (by more than one altruists), meaning that the growth of $p_{cov}$ will slow down. This leads to the leveling off of throughput. In addition, since adding altruists contributes to a linear increase of energy consumption, the energy consumption starts to rise and deviate from the minimum.

\subsection{Arbitrary Deployment}\label{sec:deploy_arbi}

In arbitrary deployment, altruists can be carefully placed on a given topology formed by peers.

\begin{thm}\label{thm:nphard}
Consider a network with a given topology formed by peers on a finite plane. The problem of determining the minimum number and the locations of altruists to achieve full cooperation coverage, is NP-hard.
\end{thm}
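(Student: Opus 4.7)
The plan is to prove NP-hardness by a polynomial-time reduction from \textsc{Minimum Geometric Disk Cover} (MGDC): given a finite planar point set $P=\{p_1,\ldots,p_n\}$ and a radius $r$, find the minimum number of radius-$r$ disks whose union contains $P$. MGDC is a classical NP-hard problem, so a polynomial-time reduction suffices.

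I would realise each point $p_i$ of an MGDC instance as a tight gadget of four peers placed inside an $\epsilon$-ball around $p_i$, in general position so that the induced subgraph is a $K_4$. Every peer in the gadget then has degree $3$, and \pref{prop:form_up}(a) applies (both degrees are $\ge 2$ and neither equals $2$), so all six peer pairs are UPs. The gadgets are positioned so that the cluster centres $p_i$ are pairwise separated by more than $r + 2\epsilon$; this ensures that no communication edges, and hence no UPs, can occur across different gadgets. The resulting peer network has exactly $6n$ UPs, partitioned into $n$ gadget-local cliques of UPs.

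Because all four peers of gadget $i$ lie within $\epsilon$ of $p_i$, the triangle inequality shows that an altruist at location $a$ covers every UP of gadget $i$ whenever $\|a - p_i\| \le r - \epsilon$ and only when $\|a - p_i\| \le r + \epsilon$. Hence ``covering all UPs of gadget $i$'' is, up to an $O(\epsilon)$ slack, the same as ``placing a radius-$r$ disk covering $p_i$,'' and $k$ altruists achieving full cooperation coverage of the constructed peer network correspond exactly to $k$ radius-$r$ disks covering $P$, provided $\epsilon$ is taken sufficiently small. Since an altruist covering some UP of gadget $i$ must lie in a lens centred near $p_i$, and since inter-gadget UPs do not exist, there is no incentive for a minimum altruist placement to share a single altruist between two distinct gadgets unless the disk-cover solution would share a disk between the corresponding points.

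The main obstacle I foresee is making ``sufficiently small $\epsilon$'' precise enough to yield a strictly polynomial reduction. For rational input coordinates of bit length $L$, the MGDC optimum is a piecewise-constant function of the radius whose jump values are algebraic numbers of polynomial degree, and two distinct such critical radii are separated by at least $2^{-\mathrm{poly}(L)}$; thus $\epsilon$ can be chosen with polynomially many bits so as to lie strictly inside a single constant piece of this step function, which forces the MGDC optima at radii $r-\epsilon$, $r$, and $r+\epsilon$ to coincide. Once this numerical bookkeeping is handled, a polynomial-time algorithm for the altruist placement problem would yield one for MGDC, establishing NP-hardness.
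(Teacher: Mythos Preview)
Your reduction from \textsc{Minimum Geometric Disk Cover} has a genuine gap at the step where you assert that ``the cluster centres $p_i$ are pairwise separated by more than $r+2\epsilon$.'' The $p_i$ are the input points of the MGDC instance; you do not get to choose their spacing. Hard MGDC instances typically have points packed well within distance $r$ of one another---that clustering is precisely what creates the combinatorial difficulty of deciding which points to group under a common disk. If you restrict attention to instances whose points are pairwise more than $r$ apart, you must argue separately that this restricted MGDC is still NP-hard, and you do not. Conversely, if you drop the separation assumption, peers in different gadgets become neighbours (altruists and peers share the same transmission range $r$ in this model), cross-gadget edges and hence cross-gadget UPs appear, and your ``exactly $6n$ UPs'' count fails; covering a cross-gadget UP $(u,v)$ requires an altruist in the lens $B(u,r)\cap B(v,r)$, which is not the same condition as covering either $p_i$ or $p_j$ by a single disk, so the clean altruist--disk correspondence collapses.

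By contrast, the paper's proof (in the appendix) goes through the classical set-cover problem rather than a geometric disk-cover problem, which sidesteps this separation issue entirely. Your MGDC route might be salvageable---for example by establishing that $r$-separated MGDC remains NP-hard, or by redesigning the gadgets so that every cross-gadget UP incident to gadget $i$ is automatically covered whenever all intra-gadget UPs of gadget $i$ are---but as written the reduction is incomplete.
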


\begin{IEEEproof}See Appendix.\end{IEEEproof}
\thispagestyle{appendixpage}

We remark on how to solve this problem in practice. In our proof, we have converted this problem into the classic set cover problem\cite{karp72} which has approximate solutions using a number of greedy algorithms (see book \cite{intro2alg2ed-greedy}). In our particular case (node deployment), these algorithms can be executed offline and hence do not introduce any runtime overhead. Regarding the performance of these algorithms, Alon et al. \cite{alon06} have recently established a lower bound to the approximation ratio, that such a greedy algorithm can achieve in polynomial time, to be $c \cdot \ln n$, where $c$ is a constant coefficient and $n$ is the number of elements to cover (i.e., UPs in our case).

A plausible thought is that we can carefully deploy altruists to cover the {\em entire region} and thereby achieve full cooperation coverage irrespective of the topology of peers. We can show that the minimum number of altruists to cover a rectangular area of $w\times h$ is $\lceil w/\sqrt{2}r \rceil \cdot \lceil h/\sqrt{2}r \rceil$. However, this argument is not true because covering an entire region is not equal to covering each UP (of two peers by a common altruist).

\section{Cost Efficiency}\label{sec:costeff}

We propose a metric called bit-meter-price (BMP) ratio to measure the cost efficiency of a protocol.

\subsection{Bit-Meter-Price Ratio}\label{sec:bmpdef}

BMP is a performance metric defined for a network:
\begin{align}\label{eq:bmp}
BMP \triangleq \frac{\overrightarrow{F}\cdot \overrightarrow{D}\cdot b_0}{ (N_p+N_a)\cdot \max(P_p^{max}, P_a^{max})},
\end{align}
where $\overrightarrow F$ is a vector of all the flows' throughput, $\overrightarrow{D}$ is a vector of all the flows' source-to-destination Euclidean distances, $N_p$ and $N_a$ are the total number of peers and altruists, respectively, $P_p^{max}$ and $P_a^{max}$ are the maximum power consumption rate among all the peers and the altruists, respectively, $b_0=e_0/c_0$, and $e_0$ and $c_0$ are the initial energy and the unit cost of a node (altruists and peers are the same devices), respectively.

BMP can be understood as
\[ \frac{Throughput (F) \cdot Distance (D) \cdot Lifetime (L)}{Price (C)},\]
where
\begin{align}
L &\triangleq \frac{e_0}{\max(P_p^{max}, P_a^{max})},\label{eq:lifetime-def}\\
C &\triangleq c_0 \cdot (N_p + N_a). \nonumber
\end{align}
In words, BMP is the total amount of successfully delivered data multiplied by end-to-end distance during the network's operational time and normalized by system resources. So the higher BMP, the better performance. The unit of BMP is bit$\cdot$m$/$\$.

In \eqref{eq:lifetime-def}, lifetime is defined as the time until any node (a peer or an altruist) runs out of energy. As peers are the nodes who actually perform the essential task of a network (transferring data), it also makes sense to define lifetime in terms of peers only, viz., $L = e_0/P_p^{max}$. It is easy to see this alternative definition is to the favor of {\sl Altruistic} because it only leads to a {\em higher} BMP for {\sl Altruistic}. But we still use definition \eqref{eq:lifetime-def} in our study.

The applicable traffic patterns of BMP include many-to-one (tree), many-to-multiple (mesh), and many-to-many (ad hoc). As such, its intended applications broadly cover data collection, Internet access, conferencing, p2p communication, file transfer, etc.  BMP can be applied to networks of various topologies and spanning any (regular or irregular) plane areas (which are accounted for by $\overrightarrow{D}$), networks with different node models and numbers of nodes (accounted for by $e_0$, $c_0$ and $N_p+N_a$),\footnote{It is also easy to extend \eqref{eq:bmp} to accommodate for a heterogeneous network which contains multiple node models, by using aggregative summation.} and networks irrespective of single- or multi-channel, single- or multi-hop. For networks without altruists, simply set $P_a^{max}=0$ and $N_a=0$. Ultimately, BMP may be generally used to evaluate cost efficiency for various protocols in various scenarios.

\begin{figure}[tb]
\centering
\ifdefined\thesis
    \subfloat[Multi-hop networks.]{
    \includegraphics[width=0.48\linewidth]{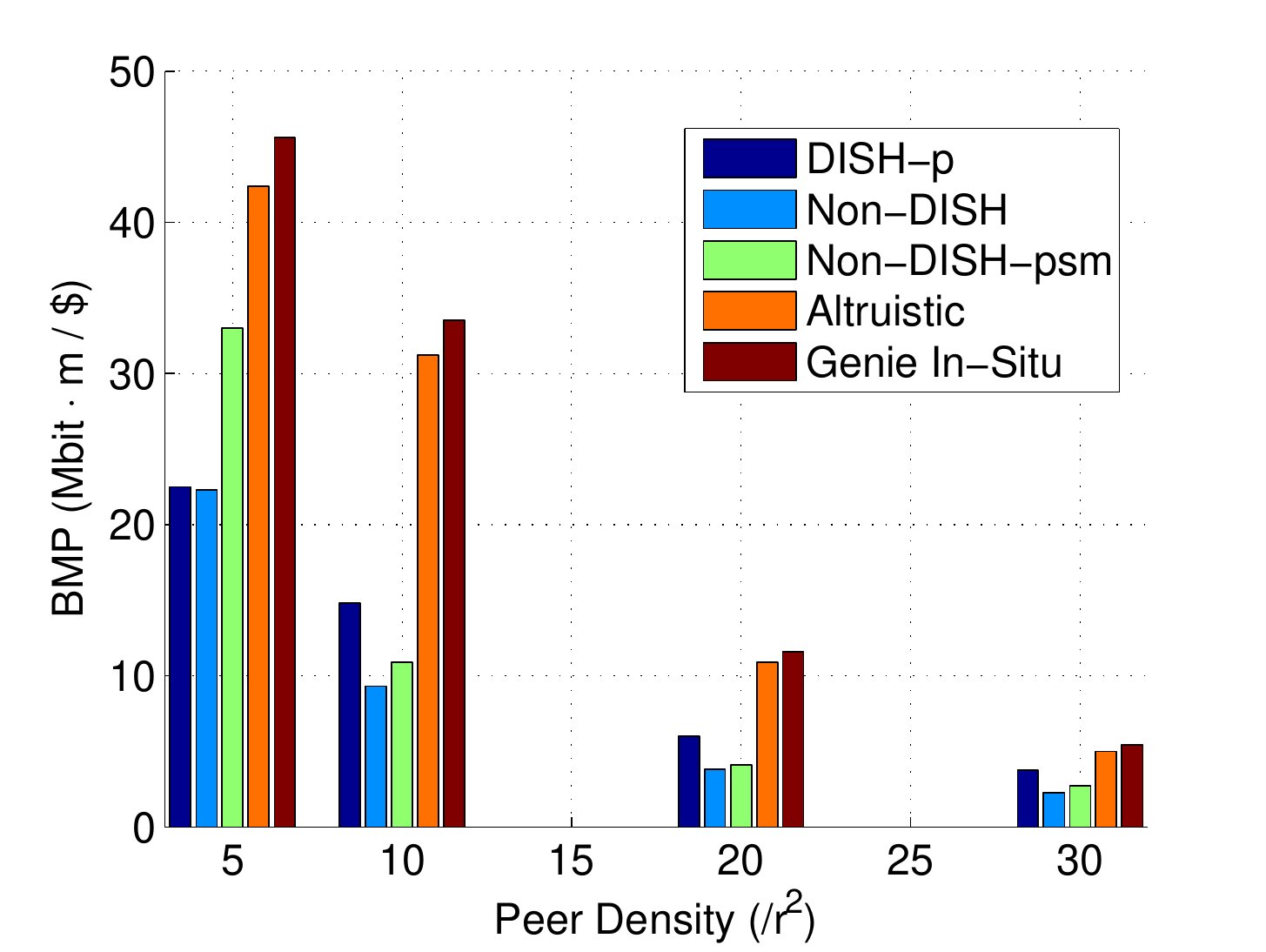}
    \label{fig:bmp_mt}}
    \subfloat[Single-hop networks.]{
    \includegraphics[width=0.48\linewidth]{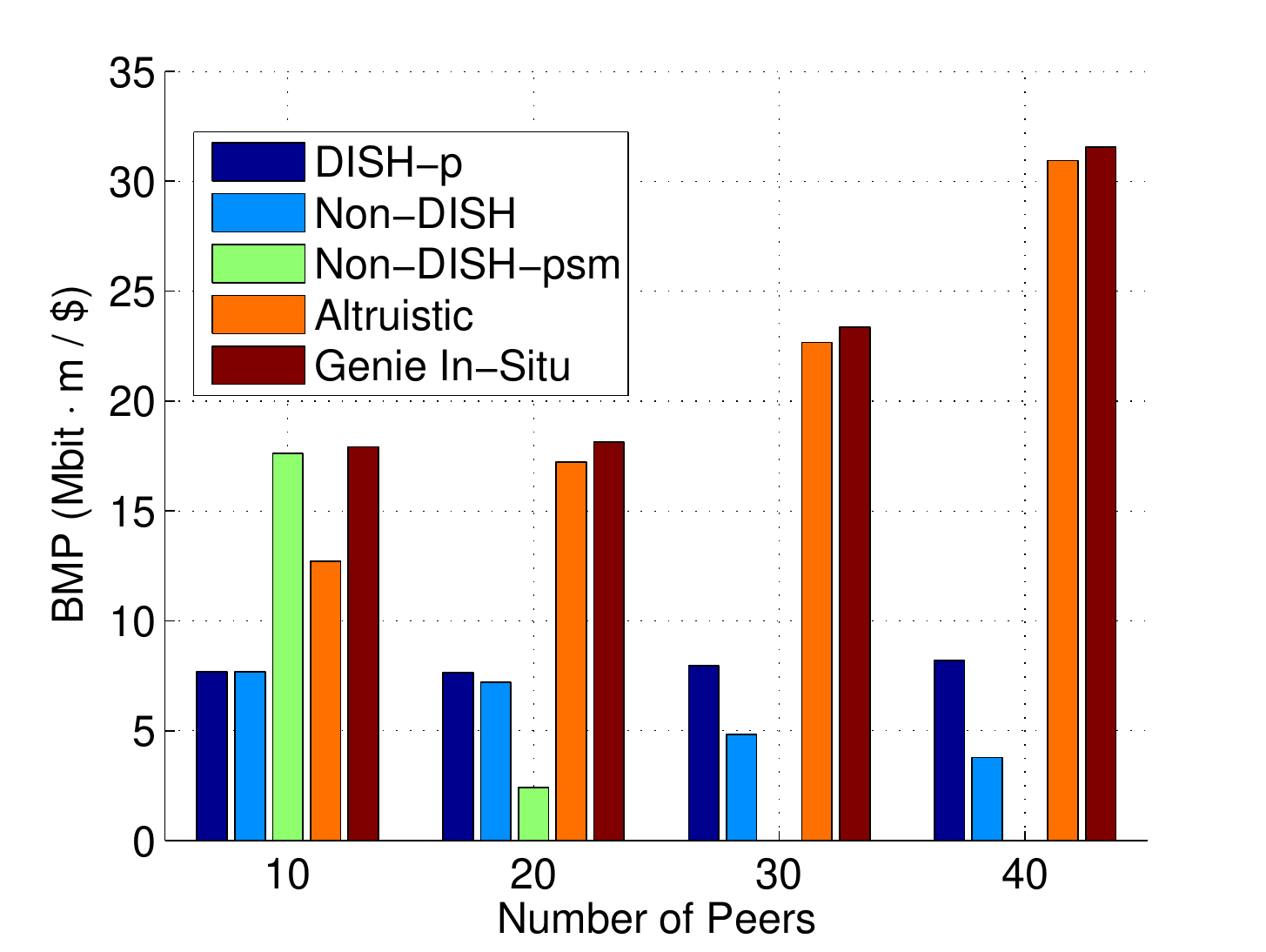}
    \label{fig:bmp_sg}}
\else
    \subfloat[Multi-hop networks.]{
    \includegraphics[width=0.65\linewidth]{bmp_mt}
    \label{fig:bmp_mt}}\vfil
    \subfloat[Single-hop networks.]{
    \includegraphics[width=0.65\linewidth]{bmp_sg}
    \label{fig:bmp_sg}}
\fi
\caption{Cost efficiency evaluated via BMP. The higher, the better.}
\label{fig:BMP}
\end{figure}

\subsection{BMP Evaluation}\label{sec:bmpeval}

We conduct simulation and compute BMP for the five protocols. Since all the protocols use the same devices, the value of $b_0$ does not affect comparison and we set $b_0=1$J/\$. For {\sl Altruistic}, we deploy altruists with density $\rho_{alt}=1.31/r^2$ in multi-hop networks according to \sref{sec:deploy_rand}, and deploy a single altruist in single-hop networks which achieves full cooperation coverage. Each source node generates data at 25kbps in multi-hop networks and 160kbps in single-hop networks.

The results are shown in \fref{fig:BMP}. We see that, apart from {\sl Genie In-Situ}, {\sl Altruistic} is the clear winner among all the protocols: its BMP is more than twice the BMP of the other protocols in most cases. Compared to {\sl Genie In-Situ}, the BMP of {\sl Altruistic} is only slightly lower. In fact, for a {\em real} in-situ energy conscious DISH protocol (without a genie), the complexity and overhead for rotating the responsibility of cooperation, as discussed in \sref{sec:qualitative}, would negate this marginal advantage of {\sl Genie In-Situ} over {\sl Altruistic}.

Here we also provide an intuitive understanding of how {\sl Altruistic} performs well. We inspect each component of BMP for {\sl Altruistic} and {\sl DISH-p} at the peer density of 10/$r^2$ in \fref{fig:bmp_mt}, as an example.
\begin{itemize}
\item $Throughput \cdot Distance$: measured to be 3826 Mbit$\cdot$m/s for {\sl DISH-p} and 3822 Mbit$\cdot$m/s for {\sl Altruistic}. These two values are almost equal, which indicates that, since $\overrightarrow{D}$ is statistically the same for the two protocols, a cooperation coverage of 80\% ($\rho_{alt}=1.31/r^2$) suffices to achieve a cooperation gain (in terms of throughput) similar to that achieved by the opportunistic cooperation in {\sl DISH-p}.\footnote{A theoretical analysis of the probability of obtaining cooperation in DISH-p can be found in \cite{tie08mobihoc,tie10tmc-analysis}.}
\item $Lifetime$: lifetime of {\sl Altruistic} ($e_0/0.718Watt$) is 2.385 times that of {\sl DISH-p} ($e_0/0.301Watt$)---peers can sleep due to the existence of altruists.
\item $Price$: {\sl Altruistic} uses 407 nodes which is 13\% more than what {\sl DISH-p} uses (360 nodes).
\end{itemize}
Eventually, BMP of {\sl Altruistic} is 31.2 Mbit$\cdot$m/\$ and BMP of {\sl DISH-p} is 14.8 Mbit$\cdot$m/\$, which translates to a significant ratio of 2.11.

Finally, we explain the different trends that the five protocols exhibit in different scenarios. In \fref{fig:bmp_mt} (multi-hop networks), the BMP declines as the number of peers increases. This is because of the drop of throughput (exponentially) as established by \cite{gupta00}, the drop of lifetime due to more energy consumed in packet transmission and channel contention, and the increase of cost. In single-hop networks (\fref{fig:bmp_sg}), (1) the BMP of {\sl Non-DISH} gradually declines due to the lack of information sharing, (2) the BMP of {\sl Non-DISH-psm} drops remarkably due to the lack of both information sharing and gathering, (3) the BMP of {\sl DISH-p} largely maintains, and (4) the BMP of {\sl Altruistic} and that of {\sl Genie In-Situ} both rise as by virtue of energy conservation of the strategies as well as the throughput benefit of DISH.

In summary, the evaluation of cost efficiency demonstrates that the additional cost of altruists pays off; the performance gain from altruistic DISH more than offsets the marginal cost increase.

\section{Throughput-Energy Trade-off}\label{sec:perf}

This section zooms in to specifically inspect the throughput and energy performance.

\subsection{Multi-Hop Networks}\label{sec:mthop}

\begin{figure}[tb]
\centering
\ifdefined\thesis
    \subfloat[Throughput.]{\includegraphics[width=0.45\linewidth]{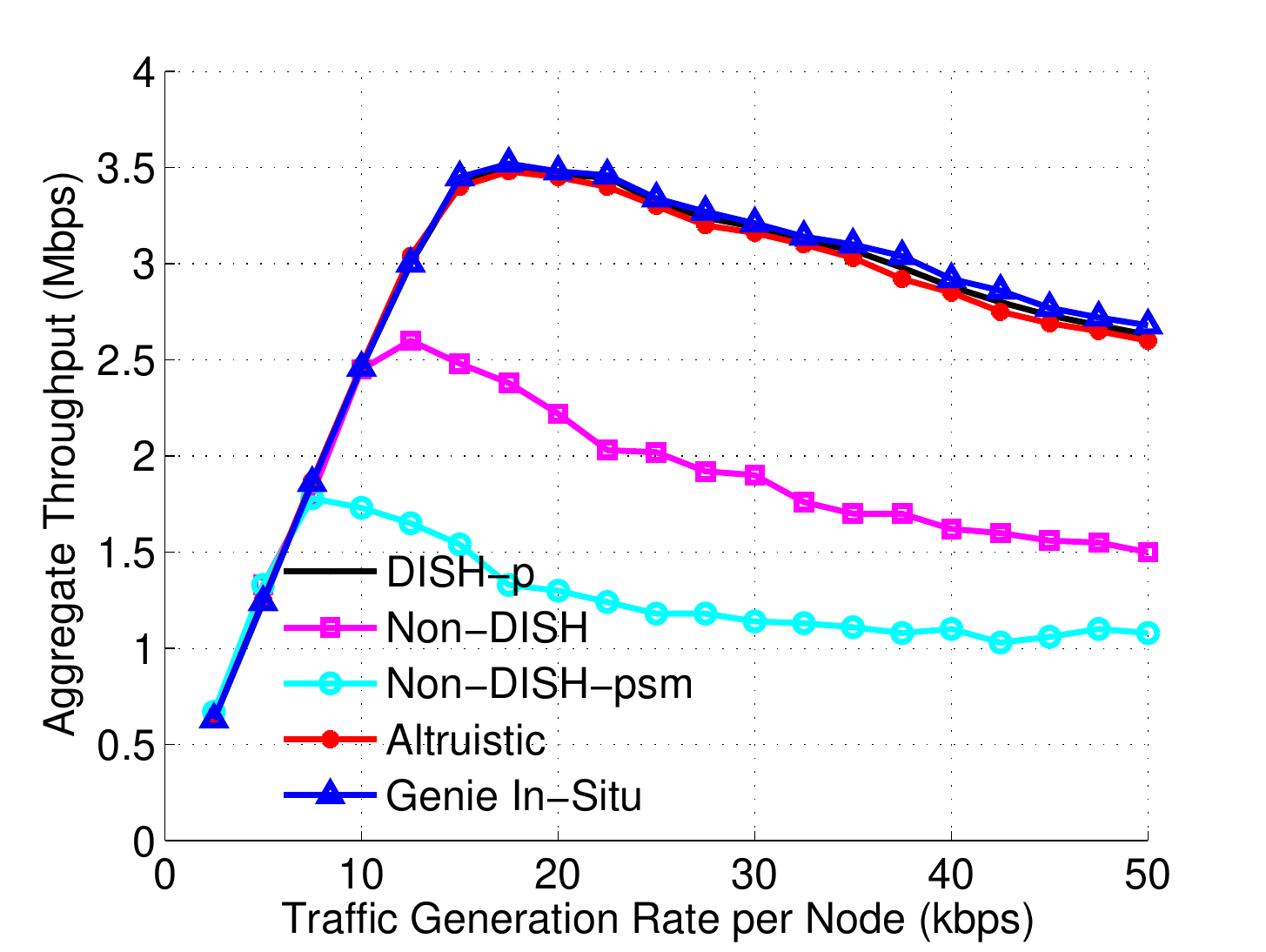}
    \label{fig:mt_thpt}   }
    \subfloat[Power consumption.]{\includegraphics[width=0.45\linewidth]{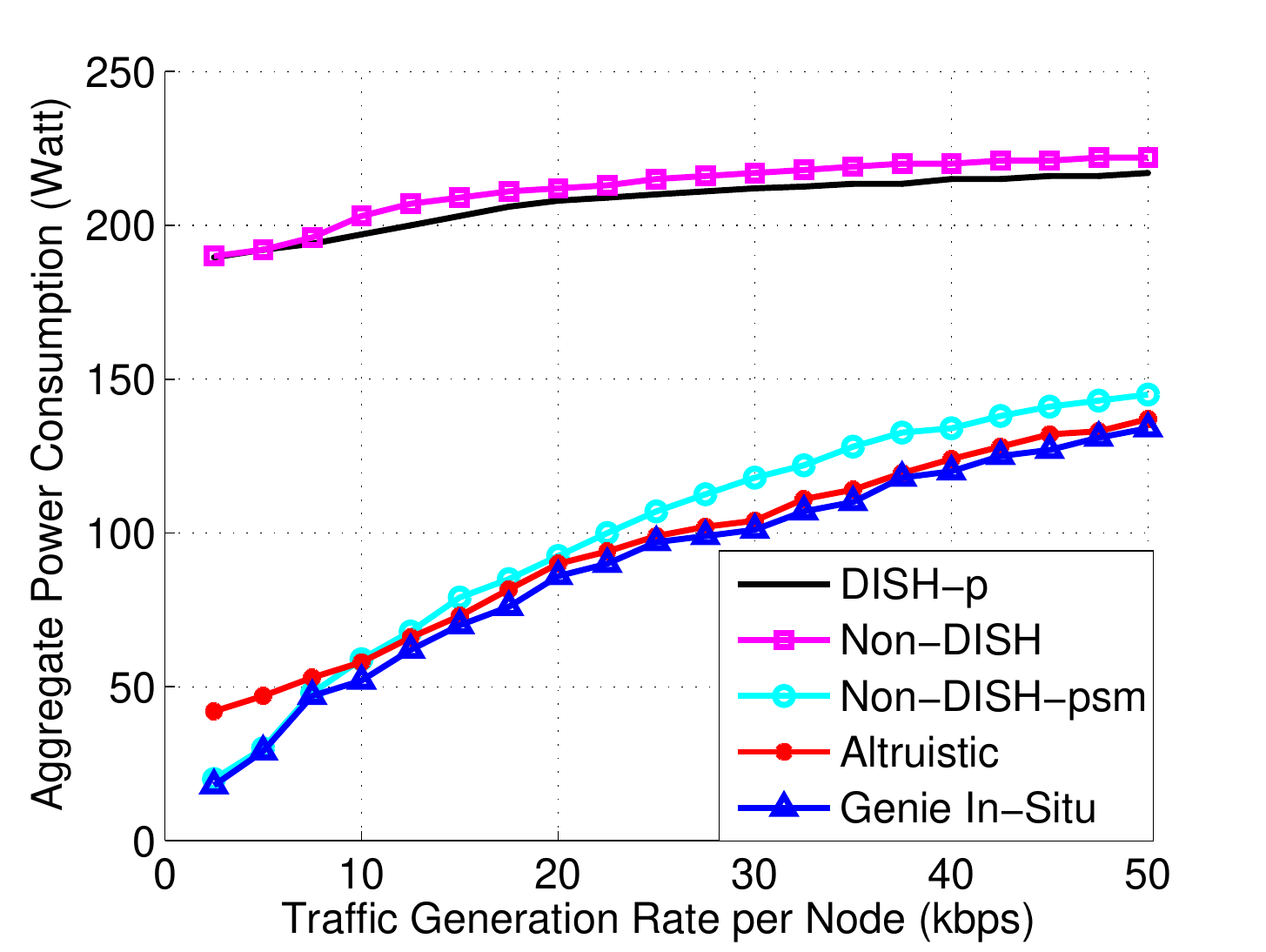}
    \label{fig:mt_ener}   }
\else
    \subfloat[Throughput.]{\includegraphics[width=0.65\linewidth]{mt_thpt_80_10-all}
    \label{fig:mt_thpt}   }\vfil
    \subfloat[Power consumption.]{\includegraphics[width=0.65\linewidth]{mt_ener_80_10-all}
    \label{fig:mt_ener}}
\fi
\caption{Throughput-energy tradeoff in multi-hop networks.}
\label{fig:mthop}
\end{figure}

The simulation setup remains the same, and the results are shown in \fref{fig:mthop} for $\rho_{alt}=1.31/r^2$ and $\rho_{peer}=10/r^2$ (the results for $\rho_{peer}=20/r^2$ are similar and omitted). \fref{fig:mt_thpt} (throughput) clearly indicates three levels as low, medium, and high, corresponding to {\sl Non-DISH-psm}, {\sl Non-DISH}, and the three DISH protocols ({\sl DISH-p}, {\sl Genie In-Situ} and {\sl Altruistic}), respectively. For example at the traffic generation rate of 25kbps, {\sl Non-DISH} achieves 64\% higher throughput than {\sl Non-DISH-psm}, and the three DISH protocols achieve 65\% higher than {\sl Non-DISH}. This is readily explained by the use of information gathering and/or sharing. The main message to take away from this set of results, however, is that both of the two energy-efficient strategies can {\em preserve} the throughput benefit of DISH.

For power consumption as shown in \fref{fig:mt_ener}, we see that both {\sl Altruistic} and {\sl Genie In-Situ} save a remarkable amount (40--80\%) of energy consumed by {\sl DISH-p} or {\sl Non-DISH}. Noteworthily, {\sl Altruistic} even outperforms {\sl Non-DISH-psm} (though slightly) under higher traffic load, which is somehow counter-intuitive because {\sl Non-DISH-psm} seems to be the most energy-frugal protocol where all nodes sleep whenever possible, and {\sl Altruistic} has additional nodes who are always awake. In fact, the amount of energy saved by the altruists (through avoiding collisions and retransmissions caused by MCC problems) becomes more significant under higher traffic, where MCC problems are created more often, and outweighs the energy consumed by these few altruists.

\subsection{Single-Hop Networks}\label{sec:sghop}
\begin{figure*}[tb]
\centering
\subfloat[Throughput (saturated traffic).] {\includegraphics[width=0.33\textwidth]{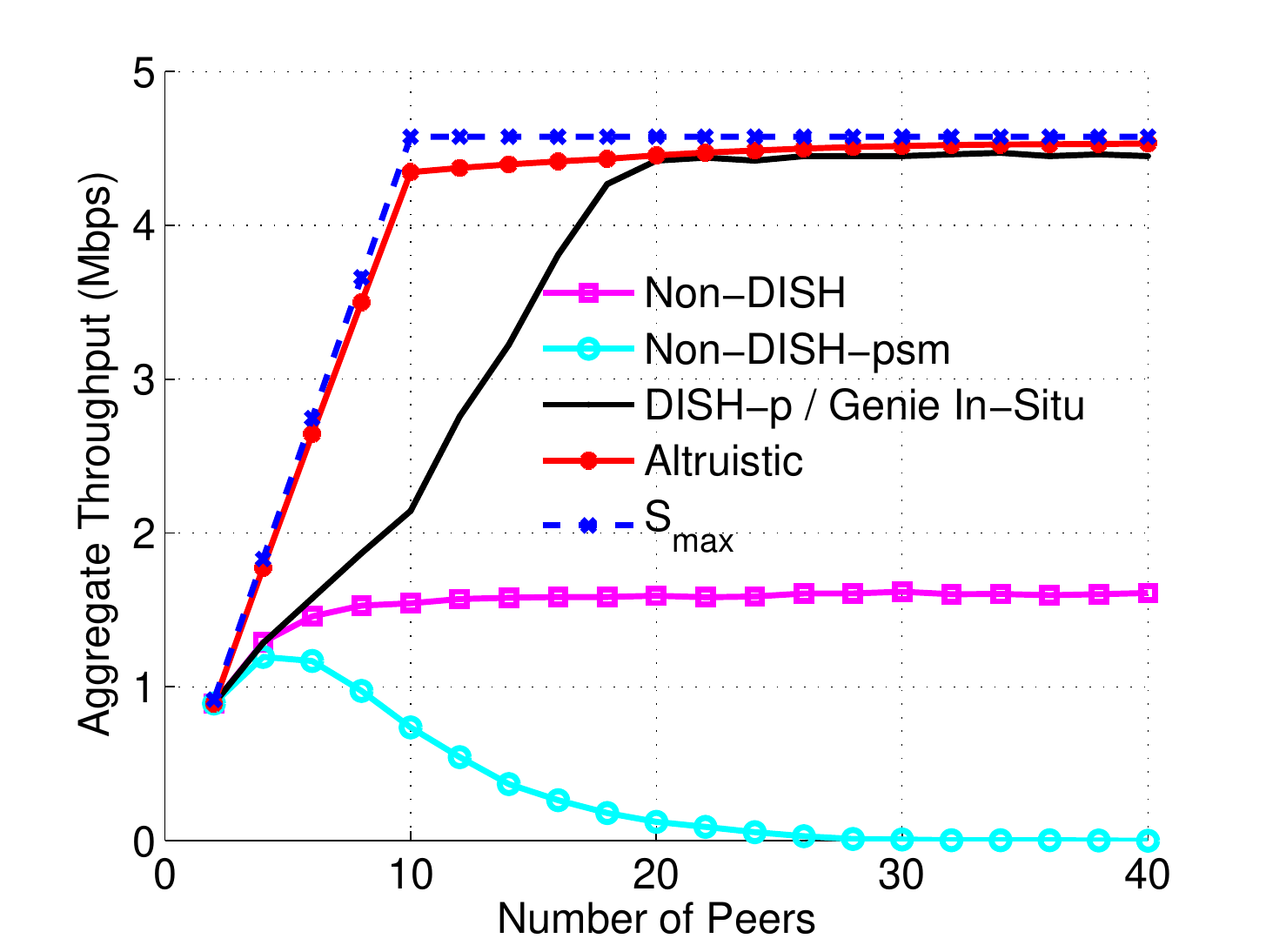}
\label{fig:sg_thpt_highload}}
\subfloat[Power consumption (saturated traffic).]
{\includegraphics[width=0.33\textwidth]{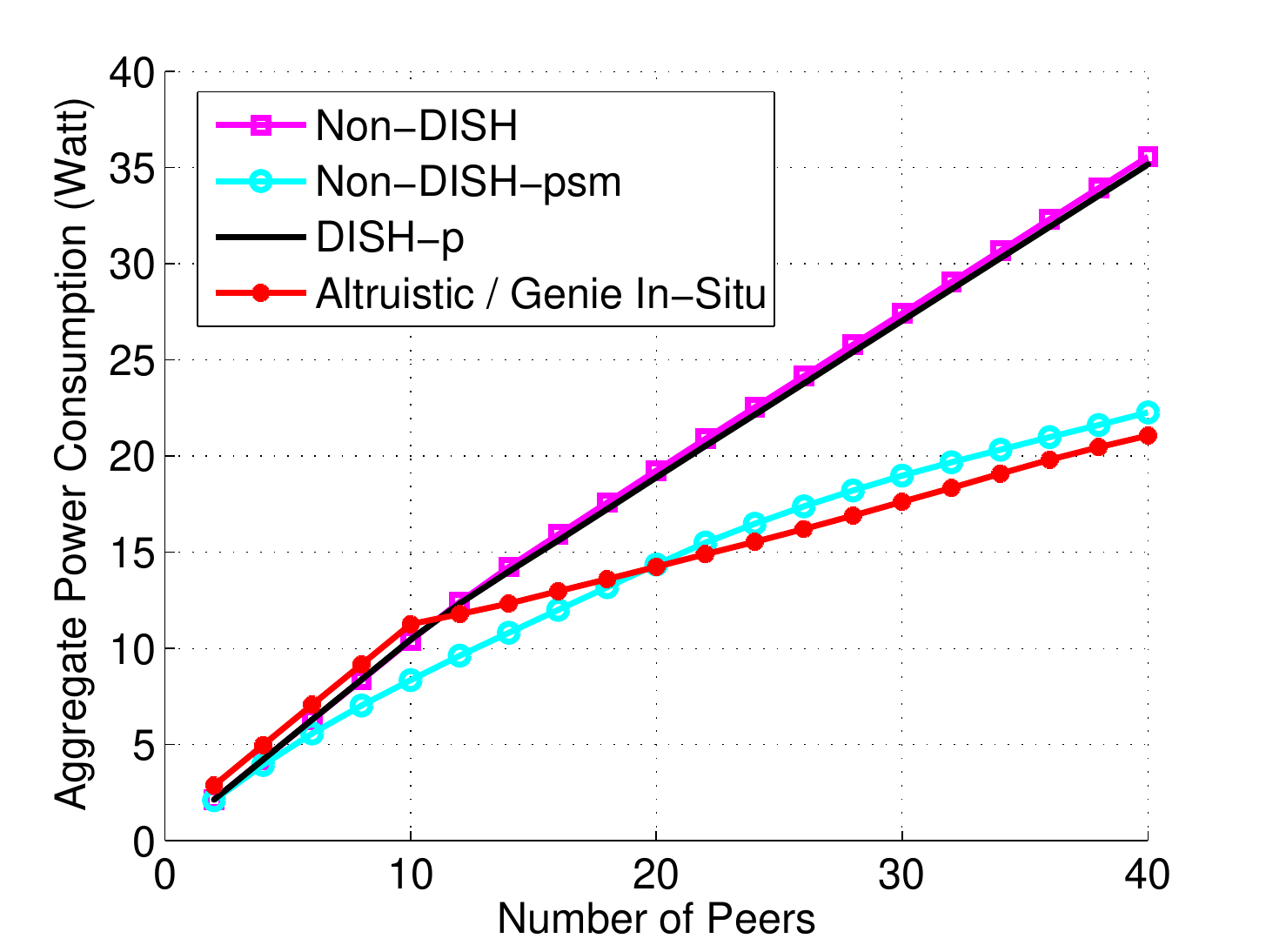}\label{fig:sg_ener_highload}}
\subfloat[Power consumption (light traffic).]
{\includegraphics[width=0.33\textwidth]{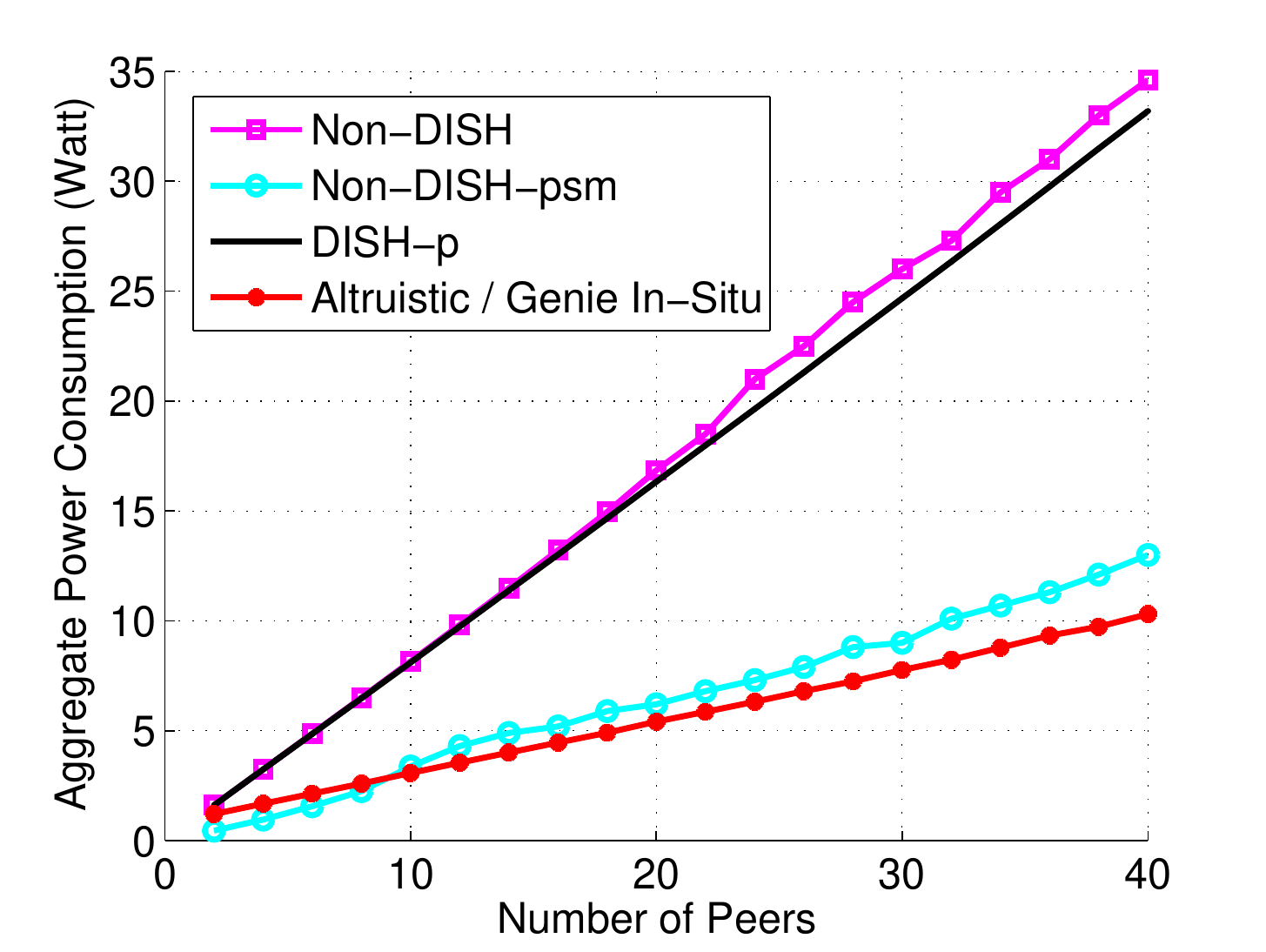}\label{fig:sg_ener_lowload}}
\caption{Throughput-energy tradeoff in single-hop networks. Some curves overlap almost completely and so are plotted as one curve for clearer visualization, as can be seen in the legend.}
\label{fig:sghop}
\end{figure*}

{\sl Altruistic} uses one altruist in single-hop networks. The simulation was conducted under high traffic load (source nodes are always backlogged) and low traffic load (traffic generation rate is 160kbps) respectively, and the results are summarized in \fref{fig:sghop}. For throughput shown in \fref{fig:sg_thpt_highload}, other than observing the similar gaps to \fref{fig:mt_thpt}, we notice that {\sl Altruistic} outperforms {\sl DISH-p} and even {\sl Genie In-Situ} when the number of peers is less than 20. This is because, when peers are few and traffic load is high, peers will stay on data channels most of the time and lead to {\sl DISH-p} and {\sl Genie In-Situ} lacking of cooperative nodes (who must be on the control channel). However, {\sl Altruistic} has a {\em dedicated} cooperative node and does not face this problem at all. Another observation is that {\sl Altruistic} closely approaches $S_{max}$, a theoretical throughput upper bound:
\begin{align}\label{eq:upbound}
S_{max}=\frac{\min(m,n_f)\cdot T_{payload}\cdot W}{T_{cca}^{min}+T_{ctrl}+T_{data}+T_{sw}},
\end{align}
where $m$ is the number of data channels, $n_f$ is the number of flows, $W$ is the data channel bandwidth, $T_{payload}$ is the transmission time of data payload, $T_{cca}^{min}$ is the minimum CCA duration, $T_{ctrl}$ and $T_{data}$ are the duration of a successful control/data channel handshake, and $T_{sw}$ is channel switching delay. The derivation of $S_{max}$ is given in \cite{tie09tmc}. Moreover, when there are more than 20 nodes, the throughput of {\sl Non-DISH-psm} is very low, because each data channel has more than 4 fully-loaded competing nodes on average (recall that there are 5 data channels) and hence is almost always busy. As nodes in {\sl Non-DISH-psm} do not gather information and always choose a channel from all channels, collision will happen for almost every channel use.

\fref{fig:sg_ener_highload} and \fref{fig:sg_ener_lowload} present the energy performance. Under both high and low traffic loads, {\sl Altruistic} conserves energy substantially. For example in the low-load scenario at 40 peers, it consumes only 30\% power of {\sl DISH-p}. In addition, {\sl Altruistic} again slightly outperforms {\sl Non-DISH-psm}, which has been explained in \sref{sec:mthop}.

In summary, the simulations demonstrate that altruistic DISH conserves a significant amount of energy and well maintains the throughput benefit of DISH.

\ifdefined\thesis
\else
\section{Hardware Implementation}\label{sec:testbed-energy}

We have also implemented four protocols on COTS hardware (all the five except {\sl Genie In-Situ} which requires a non-implementable genie). To the best of our knowledge, these are the first full implementation of asynchronous multi-channel MAC protocols for ad hoc networks (see review in \sref{sec:impl-relwork}).

\subsection{Implementation}\label{sec:impl}
\ifdefined\thesis
\subsection{Platform Selection}
\else
\subsubsection{Platform Selection}
\fi

We chose a micro-controller (MCU) based platform with an ASIC radio, instead of (i) an FPGA-based platform which was more expensive and required hardware description language (HDL) in programming, or (ii) a software radio whose MAC source code was not fully available. Among the ASIC radios, we chose 802.15.4 radios instead of 802.11 radios because 802.11-radio based devices (such as laptops and PDAs) have higher cost and bigger size than 802.15.4 devices, and 802.11-based development kits (such as HostAP \cite{hostap} and MadWifi \cite{madwifi} as used by \cite{sigcomm09dual}) have more limited MAC layer control than 802.15.4-based software (such as TinyOS~\cite{tinyos}). 

Eventually, we chose TelosB Mote~\cite{telos}, which is a MCU platform with an ASIC radio (CC2420~\cite{chipcon}) as our hardware platform and TinyOS 2.0 as our software platform.
TinyOS has almost full control over the MAC layer, and its component-based architecture and C-like programming language enable rapid development.
Note that such a platform choice should not be used to establish benchmarks for WiFi cards, though it suffices for a comparative study like ours.

\ifdefined\thesis
\subsection{Overcoming Limitations}\label{sec:fix}
\else
\subsubsection{Overcoming Limitations}\label{sec:fix}
\fi

There are two major limitations of the hardware we choose. First, the CC2420 radio supports packet size of up to only 127 bytes. We overcome this by substituting each data packet with a sequence of data fragments and treating the inter-fragment intervals as {\em payload}. In other words, let $n_{frag}$ be the number of fragments, $l_{frag}$ be the length of each fragment, and $\tau$ be the interval, then the transmission time of a data packet is
\begin{equation}\label{eq:datafrag}
t_{data} = n_{frag} (\frac{l_{frag}}{w} + \tau + t_d) - \tau
\end{equation}
where $w$=250kbps is the channel data rate, $t_d$ (100--200$\mu$s) is the latency that each fragment takes to be sent into the air after being assembled in memory.  The second limitation is that the timing accuracy of TelosB is not reliable at the microsecond level while reliable at the millisecond level. Thus we proportionally scale all protocol intervals up to milliseconds, e.g., SIFS is scaled to 2ms. This way, a control channel handshake lasts for $t_{ctrl}\approx9$ms. Now getting back to \eqref{eq:datafrag}, in order to keep the ratio $t_{data}:t_{ctrl}$ close to our simulation, we chose $n_{frag}=20$, $l_{frag}=30$bytes (including preamble), and $\tau$=8ms, and consequently $t_{data}\approx175$ms.

\ifdefined\thesis
\subsection{Virtual Collision Detection}
\else
\subsubsection{Virtual Collision Detection}
\fi

Interestingly, how we overcame the limitations described above enabled us to devise a simple yet accurate technique for {\em packet collision detection}. Collision detection is useful in many network algorithms (such as collision avoidance, flooding, channel selection, and data aggregation) \cite{emnets05capture}, but is non-trivial because a usual PHY layer cannot distinguish packet collision from {\em noise corruption}. Prior techniques generally use link quality indicator (LQI) and/or received signal strength indicator (RSSI). However, they are empirical and lack in accuracy, and according to \cite{zhao03sensys,mass04link,globe03link}, it is still controversial whether RSSI or LQI is a better indicator for link quality.

Our technique is {\em virtual collision detection} which achieves the goal using {\em interleaved} fragment sequences. The idea is based on the fact that each data packet is transmitted as a sequence of fragments and the fragment interval (8 ms) is much larger than the fragment transmission time ($<$1 ms). Therefore, a good indicator of data collision is an interleaved sequence of fragments which contains fragments sent by more than one senders (recall that intervals are counted as actual payload). \fref{fig:interleave} illustrates this.

\begin{figure*}[tb]
\centering
    \includegraphics[width=\linewidth]{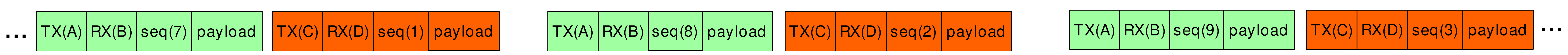}
\caption{Virtual collision detection. There are two interleaved fragment sequences, where \textsl{TX-RX}'s are {\em alternate} and \textsl{seq}'s are {\em inconsecutive}.}
\label{fig:interleave}
\end{figure*}

\subsection{Experiments}\label{sec:experiment}

For visualization purposes, we use the three LEDs on each TelosB mote to indicate specific events of interest (a maximum number of $2^3=8$ events can be represented). For example, a blue LED indicates an ongoing control channel handshake, a green LED indicates an ongoing data channel handshake, and a red LED indicates transmitting a cooperative message. Other events are indicated by LED combinations. \fref{fig:mote11} gives a snapshot in a trial indoor experiment.

\begin{figure}[tb]
\centering
\includegraphics[width=0.95\linewidth]{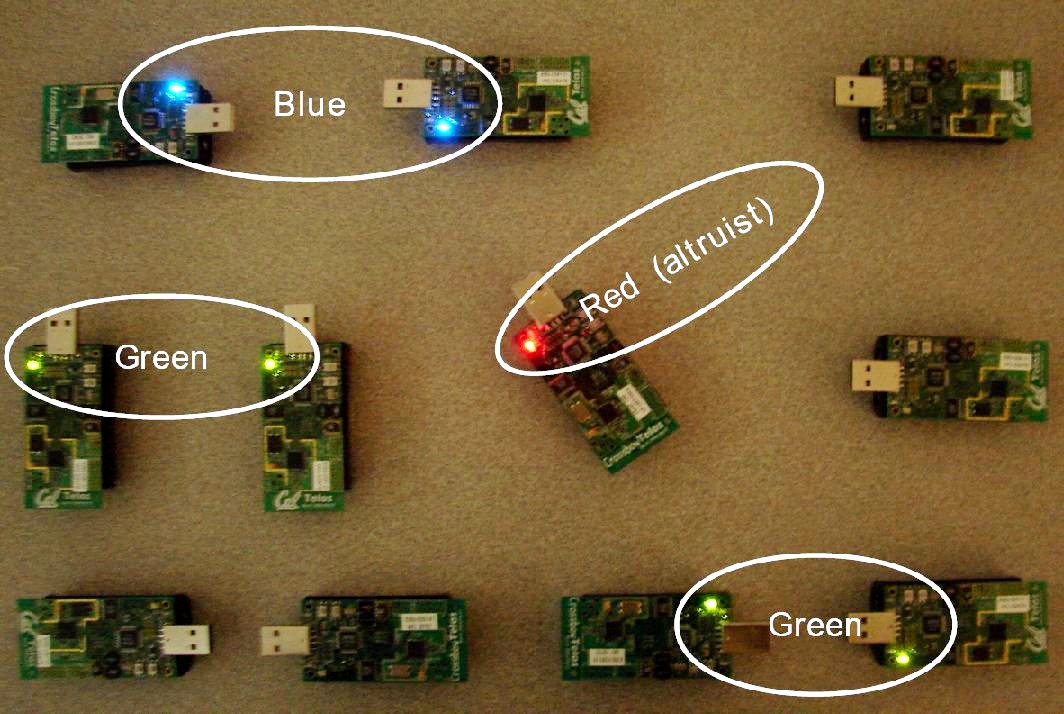}
\caption{A snapshot of a trial indoor experiment on {\sl Altruistic} with 11 nodes. The four ``green nodes'' are two sender-receiver pairs communicating on the two different data channels. A pair of ``blue nodes'' are performing a control channel handshake,
which creates a channel conflict problem because the only two data channels are already occupied. At this moment, the altruist (``red node'') identifies this
and sends a cooperative message (INV), which informs the blue nodes to back off and thereby avoid colliding with the two ongoing data transmissions.}
\label{fig:mote11}
\end{figure}

In our experiments, nodes are randomly placed in a 10m$\times$10m roof area, and the transmission power is set at 0dBm which is the maximum on CC2420.\footnote{With this setting, all nodes are within the radio range of each other, which was also used by \cite{so06sync,mcmac07wcnc,realman06hybrid}. To do multi-hop experiments, a large number of nodes are needed to demonstrate the impact of a small $\rho_{alt}$ on a large $\rho_{peer}$ as shown in \sref{sec:deploy_rand} and \sref{sec:mthop}.}
Nodes are configured as disjoint flows and source nodes are always backlogged. There are three channels (one control channel and two data channels) and each is with data rate 250kbps. To compute power consumption, we trace the TX/RX/IDLE states on each node to accumulate its sojourn time for each state, and at the end of each experiment, do a weighted sum using the same power consumption rates as in the simulation setup. For protocols using power saving mode, IDLE is treated as SLEEP where peers do not overhear. Alternatively, one can put motes actually into sleep by, e.g., developing a multi-channel version of B-MAC\cite{bmac04sensys} or X-MAC\cite{xmac06sensys} so as to measure the actual battery drainage. However, measuring the energy consumption of sensor nodes accurately is not only difficult~\cite{ymac08ipsn}, but also not necessary because (1) sensor nodes have a different energy model from WiFi nodes, (2) it requires using a real sleep-wake scheduling algorithm which will lose generality as explained in \sref{sec:proto-energy}, and (3) this is a comparative study and the goal is not to establish absolute-value benchmarks for TelosB. Our approach of computing energy consumption was also used by \cite{ymac08ipsn}.

In collecting statistics for the four protocols, every data point is by averaging over 8 experiments and each experiment runs for 600 real seconds.

\begin{figure}[tb]
\centering
\ifdefined\thesis
\includegraphics[width=0.55\linewidth]{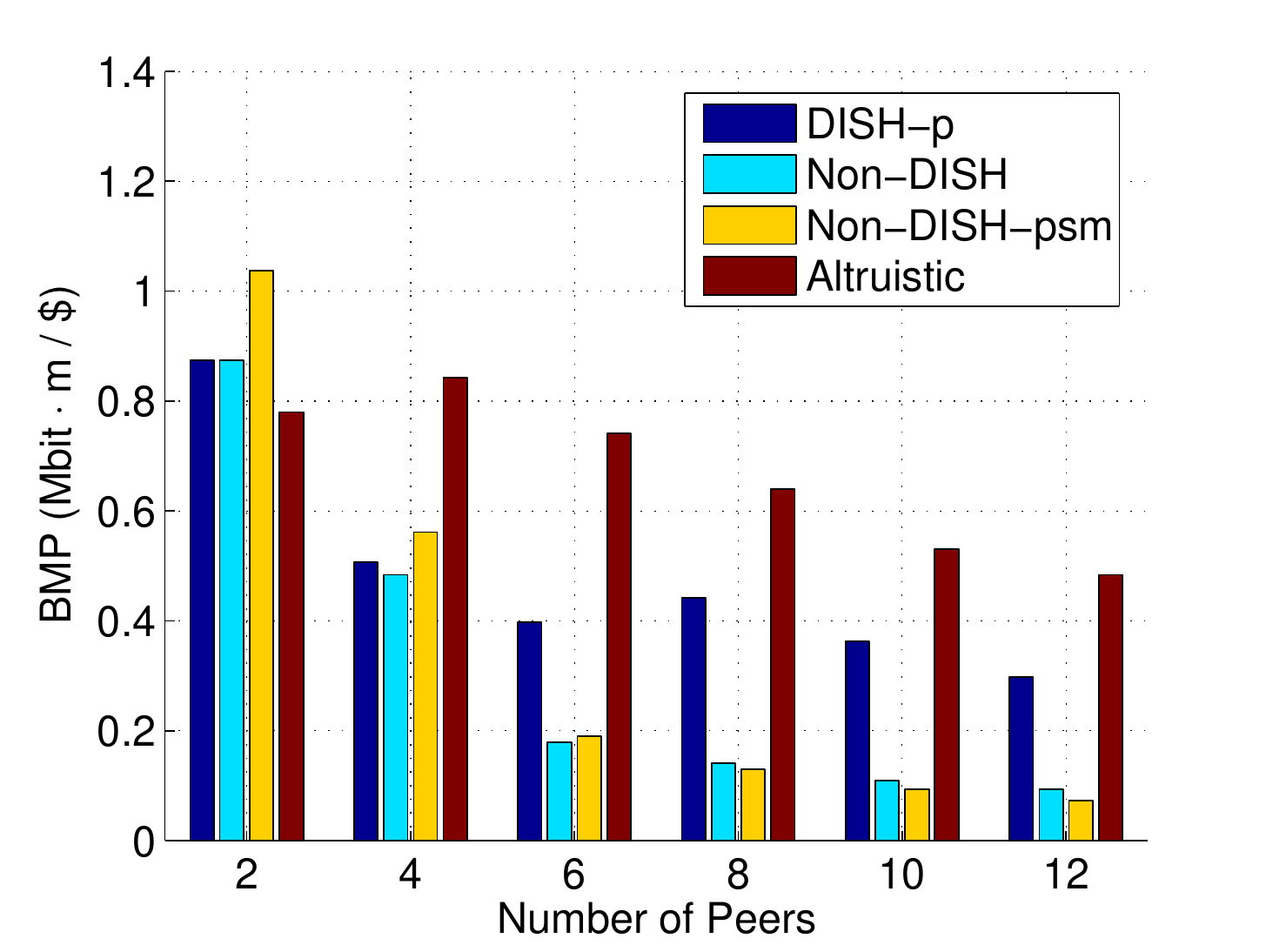}
\else
\includegraphics[width=0.7\linewidth]{exp_bmp}
\fi
\caption{Experimental results of cost efficiency.}
\label{fig:exp_bmp}
\end{figure}

\fref{fig:exp_bmp} summarizes the experimental results of cost efficiency, which confirms {\sl Altruistic} to be the clear winner among the protocols. The only exception appears when there are only two peers where {\sl Non-DISH-psm} performs the best. The reason is simply that DISH does not help in this contention-less case where there is only one sender-receiver pair, and that adding an altruist only increases cost and energy consumption.

\begin{figure}[tb]
\centering
\ifdefined\thesis
    \subfloat[Throughput.]{\includegraphics[width=0.45\linewidth]{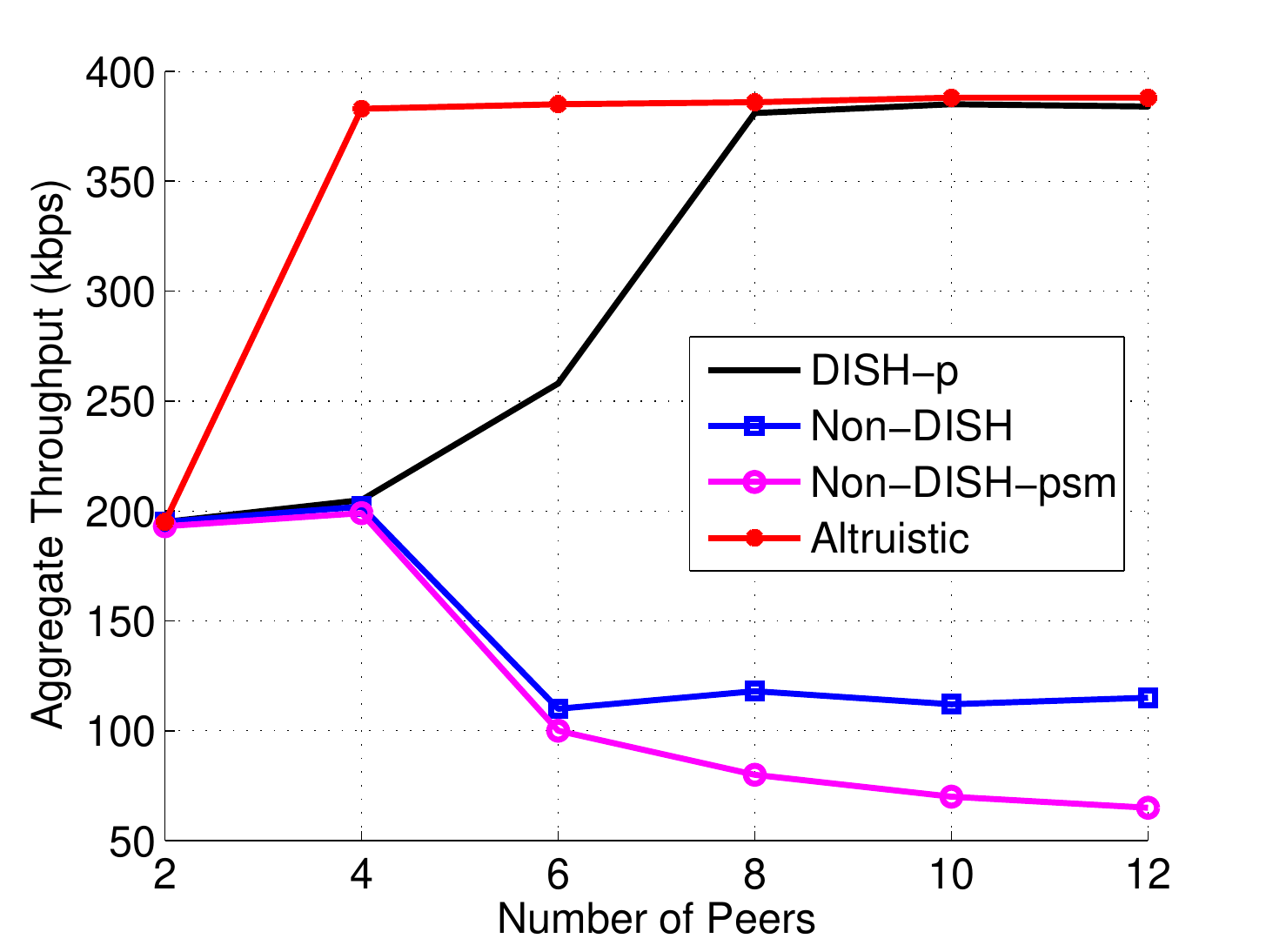}
    \label{fig:exp_thpt}}
    \subfloat[Power consumption.]{\includegraphics[width=0.45\linewidth]{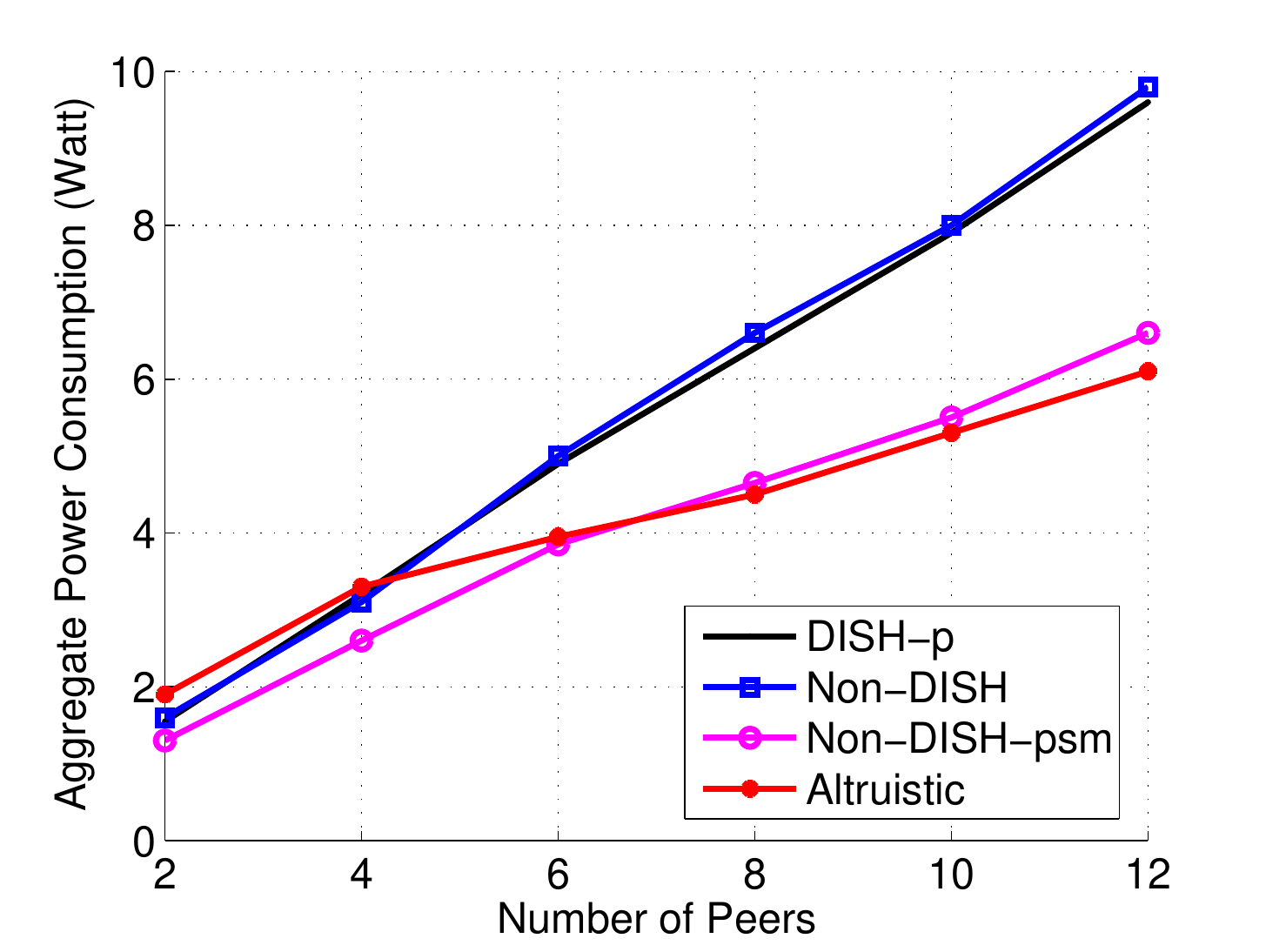}
    \label{fig:exp_ener}}
\else
    \subfloat[Throughput.]{\includegraphics[width=0.65\linewidth]{exp_thpt}
    \label{fig:exp_thpt}}\vfil
    \subfloat[Power consumption.]{\includegraphics[width=0.65\linewidth]{exp_ener}
    \label{fig:exp_ener}}
\fi
\caption{Experimental results of throughput-energy tradeoff.}
\label{fig:exp-energy}
\end{figure}

\fref{fig:exp-energy} gives the experimental results for throughput-energy tradeoff. We specifically used two data channels in order to see different trends from simulation rather than merely produce a scaled version of simulation. For throughput shown in \fref{fig:exp_thpt}, comparing it with \fref{fig:sg_thpt_highload} (simulation), we see that {\sl Non-DISH} and {\sl Non-DISH-PSM} in \fref{fig:exp_thpt} both have a sharp drop (by about 50\%) when the number of peers is 6, while in \fref{fig:sg_thpt_highload}, the throughput of {\sl Non-DISH} keeps increasing until finally saturates and the throughput of {\sl Non-DISH-PSM} {\em gradually} decreases. This difference from simulation arises from the peculiarity when there are only two data channels and 3 fully-loaded node pairs, as illustrated in \fref{fig:2channel}. 

The key message conveyed by \fref{fig:exp_thpt} is that {\sl Altruistic} still performs the best and, particularly, better than {\sl DISH-p} when the number of nodes is small, due to the guaranteed provision of cooperation.

\begin{figure}[tb]
\centering\includegraphics[trim=1.3cm 8cm 1.2cm 8cm, clip, width=\linewidth]{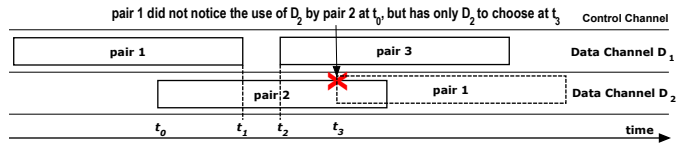}
\caption{The peculiarity when there are only two data channels. After pair 1 switching back to control channel at $t_1$, pair 3 uses $D_1$ at $t_2$, leaving pair 1 only one candidate ($D_2$) to choose. However, $D_2$ was taken by pair 2 at $t_0$ which is unknown to pair 1.}
\label{fig:2channel}
\end{figure}

Now see \fref{fig:exp_ener} for power consumption. {\sl Altruistic} consumes the lowest power among all the protocols when there are a sufficient number of nodes. Another observation is, although experiments and simulations both use the same power consumption rates, the experiment statistics are consistently lower than the simulation statistics (\fref{fig:sg_ener_highload}). This is because we prolonged the protocol intervals in our hardware implementation to overcome the inaccurate timing of TelosB, as described in \sref{sec:impl}, and hence the IDLE state appears more often and the TX/RX state appears less often in experiments than in simulations.

In summary, the testbed experiments confirm that {\sl Altruistic} achieves high throughput and low energy consumption simultaneously, and is the most cost-efficient among all the protocols under comparison. Our work also shows that multi-channel MAC protocols can be indeed implemented on COTS hardware and work with a single radio and asynchronously.

\fi

\section{Discussion}\label{sec:discuss}

\subsection{Limitations}\label{sec:limit}

Altruistic DISH becomes less effective when there are only a few peers (compared to the number of channels) or traffic is light, in which case channel contention is very mild. For instance, {\sl Altruistic} achives lower BMP than {\sl Non-DISH-psm} in \fref{fig:bmp_sg} at 10 nodes (5 data channels) under low traffic, and similarly in \fref{fig:exp_bmp} at 2 nodes. In such scenarios, in-situ energy conscious DISH could be a better choice as it is able to reduce cooperation by adapting to network dynamics.

Another limitation is that the four-way control channel handshake in the DISH protocols can incur more overhead than usual protocols. Although this can be largely offset by the cooperation gain, it is still desired to reduce the overhead. One effective way is to use {\em packet train} to amortize the overhead, which was also used by MMAC\cite{mmac04}, SSCH\cite{ssch04}, and WiFlex\cite{wiflex08}. We have adopted this technique in \cite{tie09secon} for cognitive radio networks.

\subsection{Alternative Methods for Altruistic DISH}\label{sec:alternative}

An alternative method for altruistic DISH is to add one more radio on a few peers and let these additional radios act as altruists. This may further enhance the cost efficiency as the cost of a radio is much lower than the cost of a node. The trade-off is the need of designing a multi-radio MAC protocol which, particularly, must coordinate the use of the control channel shared by the two co-located radios. As the hardware platform (TelosB) does not support multiple radios, this alternative method merits our future study that adopts a different platform.

Another alternative to prolong network lifetime is to add an extra battery to each existing node instead of adding altruists. This is simple but would present a challenge to the size of each node, be it a laptop, a mobile, or a PDA. Also, from the perspective of scalability, the additional cost (due to extra batteries) will increase linearly when the number of peers increases, whereas in the altruistic approach, the additional cost (due to extra nodes, i.e., altruists) remains constant (as shown in  \sref{sec:deploy}).

\subsection{Energy Fairness}\label{sec:fair}

A possible concern is that, being always awake, altruists may be over-burdened and drain energy very fast. A possible solution is to apply the in-situ strategy on top of altruist DISH such that altruists rotate the role of cooperation. However, this will sacrifice simplicity which is a primary advantage of the altruist strategy. Furthermore, having altruists stay awake is not necessarily energy unfair because our evaluation in terms of BMP, which already takes energy fairness into account (via $P_p^{max}$ and $P_a^{max}$, see \eqref{eq:bmp}), has shown (in both simulation and testbed) that altruistic DISH performs very well in most cases. Nonetheless, fairness might be a problem under non-uniform traffic patterns and thus merit future study.

\ifdefined\thesis
\else
\section{Related Work}\label{sec:relwork}

\ifdefined\thesis
\section{Energy-Efficient Multi-Channel MAC Protocols}\label{sec:eemc}
\else
\subsection{Energy-Efficient Multi-Channel MAC Protocols}\label{sec:eemc}
\fi

There are a few proposals on this new topic. In ad hoc networks, PSM-MMAC~\cite{wang06infocom} lets nodes to choose to be awake or doze based on the estimated number of active links, queue length and channel condition. TMMAC~\cite{tmmac07} uses the 802.11 ATIM window like MMAC~\cite{mmac04}, but in addition to negotiating channels, it also negotiates time slots for nodes to sleep in.

In wireless sensor networks (WSNs), MMSN~\cite{mmsn06} was proposed to use multiple channels. However, energy saving is not one of its design goals, but is a natural and common consequence of using multiple channels (as interference is reduced). Also, when the number of channels is small, it can be seen from the paper that MMSN consumes more energy than single-channel CSMA.
\cite{cit06wsn} proposes another protocol for cluster-based WSN. The protocol is shown to be more energy efficient than MMSN by assuming (i) all cluster heads can {\em directly} communicate with each other and (ii) there are many sink nodes and hence no single-sink bottleneck. The practicality of these assumptions can be questioned. CMAC~\cite{cmac06wcnc}, unlike MMSN and \cite{cit06wsn} which are both synchronous protocols, does not require time synchronization. However, it needs to assign every node a channel that does not overlap with any other node in 2-hop range. This means that for a network with a node density of, say, 10/$r^2$, at least 126 channels are needed, which is generally not feasible.

Our work \ifdefined\thesis described in \cref{chap:energy} \fi differs from existing work in the following: (1) instead of proposing a {\em protocol}, we propose {\em strategies} which can generally apply to a class of protocols (DISH-based protocols), (2) we do not require multiple radios as in PSM-MMAC and CMAC, nor time synchronization as in TMMAC, MMSN and \cite{cit06wsn}, and (3) our proposal can be used in both single-hop and multi-hop networks, unlike PSM-MMAC which supports WLAN only.

\ifdefined\thesis
\section{Energy-Efficient Single-Channel MAC Protocols}\label{sec:sleepwake}
\else
\subsection{Energy-Efficient Single-Channel MAC Protocols}\label{sec:sleepwake}
\fi

In ad hoc networks, Tseng et al.~\cite{tseng02infocom} proposed three power-saving protocols for multi-hop scenarios, with time synchronization not required. These protocols differ in their power saving capability and neighbor discovery time, and can be chosen according to specific application needs.

In WSNs, there are lots of proposals and most of them can be applied to or adapted for static ad hoc networks as sensor devices are more resource-constrained.
In S-MAC~\cite{smac04ton}, nodes in each neighborhood negotiate a sleep-wake schedule in order to wake up at the same time. Nodes on the border of two adjacent neighborhoods will maintain two schedules to keep connectivity. In this way, network-wide synchronization is not required. T-MAC\cite{tmac03sensys} improves S-MAC by shortening the awake period when there is no communication request. Each node wakes up at the start of an awake period, listens to the channel for a short time and, if there is no incoming data, returns to sleep immediately without waiting for the end of the awake period.
B-MAC~\cite{bmac04sensys} introduces low power listening and long preamble transmission: each data packet has a preamble slightly longer than a node's sleep period, and hence a receiver is always able to detect the transmission from a sender. Time synchronization is not required.
X-MAC~\cite{xmac06sensys} improves B-MAC by embedding a receiver address into the preamble and strobing the preamble, so that nodes who are not the intended receiver can return to sleep earlier.

\ifdefined\thesis
\section{Multi-Channel MAC Testbed}\label{sec:impl-relwork}
\else
\subsection{Multi-Channel MAC Testbed}\label{sec:impl-relwork}
\fi

There are a few hardware implementations of multi-channel MAC protocols. Chereddi et al.~\cite{realman06hybrid} reported a 4-node single-hop network testbed implemented on Linux with Atheros chipset, for a hybrid multi-channel MAC protocol proposed in \cite{kya06mc2r}. The protocol is based on a channel abstraction module and requires two interfaces per node: one is tuned to a fixed channel for packet receiving and the other switches channels for packet transmission.
McMAC~\cite{mcmac07wcnc} uses a single radio and was implemented on Telos~\cite{telos} as a proof of concept. However, the implementation was a simplified version which does not measure performance metrics such as throughput, delay, or energy consumption (the only reported performance was how long it takes to synchronize sender-receiver pairs onto common channels). Y-MAC~\cite{ymac08ipsn} is another single-radio multi-channel MAC but is proposed for WSNs. It is TDMA based and specifically deals with bursty traffic in dense WSNs. It classifies every time slot as a send or a receive slot, and divides each slot into a contention window and a send/receive window.
The protocol was implemented in RETOS~\cite{retos07ipsn} on TmoteSky motes~\cite{tmotesky}, and demonstrated low duty cycle and low delivery latency via experiments. However, throughput was not measured.

All the above protocols require time synchronization (\cite{kya06mc2r} needs loose synchronization). Recently, So et al. \cite{so06sync} showed that it is difficult to achieve synchronization in multi-channel networks and it incurs significant overhead. They also implemented a multi-channel time-synchronizing protocol, but the protocol only exchanges beacons and does not handle data packets (see Section 7.1 therein).

Most recently, there appeared two implementations of {\em asynchronous} multi-channel MAC protocols, both for WSNs. One is TMCP \cite{infocom08tmcp}, designed for {\em data collection} applications (the traffic considered was many-to-one CBR streams) and for networks with only a small number of channels. A network is partitioned into multiple subtrees and each subtree is allocated a different channel. The authors implemented the protocol on MicaZ motes and evaluated packet delivery ratio, which reflects throughput to some extent, but energy was not evaluated. Le et al.~\cite{ipsn08prac} built a multi-channel MAC testbed also using MicaZ motes and evaluated performance in terms of the number of received messages. The energy issue was not specifically considered. Like TMCP, the protocol was designed for WSN data collection and aggregation applications. Under the random traffic pattern, which is typical in ad hoc networks, it will lead to poor performance (see Section 6 therein).

\ifdefined\thesis Our testbed, as will be addressed in \cref{chap:impl}, \else Our testbed \fi differs from prior work in that (1) it is designed for ad hoc networks using a single radio per node and not using time synchronization, (2) it is able to evaluate typical performance metrics such as throughput and energy consumption, and (3) it is a full implementation of all the protocol functionalities.

\fi

\ifdefined\thesis
\section{Summary}\label{sec:conclud}
DISH %
\else
\section{Conclusion}\label{sec:conclud}
Distributed information sharing (DISH) %
\fi
can significantly boost the system throughput for multi-channel MAC protocols, but it also heighten the energy consumption due to its information sharing component (which subsumes information gathering as well). In this paper, we propose two energy-efficient strategies and conduct a comparative study on five protocols that differ in the usage of DISH and the strategies. Both simulations and testbed experiments show that altruistic DISH (1) is a very simple strategy which does not involve protocol re-design or incur additional runtime overhead, (2) substantially reduces energy consumption while maintaining (sometimes even enhancing) the throughput benefit from DISH, and also (3) apparently improves cost efficiency. The other strategy, in-situ energy conscious DISH, is suitable for applications with few nodes or light traffic, or those that preclude using additional nodes.

The key to the success of altruistic DISH is twofold. First, using altruists as dedicated cooperative nodes provides cooperation in a {\em guaranteed}, as opposed to opportunistic, manner. Second, the use of altruists shifts the resource-consuming tasks (information gathering and sharing) from {\em all} nodes to only {\em a few}.

Altruistic DISH clearly separates the data plane and the control plane: peers are solely responsible for forwarding data traffic and altruists are solely responsible for control-plane cooperation, i.e., DISH.

This \ifdefined\thesis study \else paper \fi gives the first treatment on energy efficiency for cooperative multi-channel MAC protocols. We believe that DISH is an approach worth exploring and that altruistic DISH is a simple yet effective strategy to implement DISH.

\end{document}